\newif\ifjournal
\newtheorem{lemma}{Lemma}[section]
\newtheorem{theorem}[lemma]{Theorem}
        {\hspace*{\fill}$\Box$\par}
\newcommand{\StreamAlg}{\textsf{Stream\-All\-Thresholds}\xspace}
\newcommand{\BidirectionalAlg}{\textsf{Bi\-di\-rec\-ti\-on\-al\-Alg}\xspace}
\newcommand{\SmoothAlg}{\textsf{Smooth\-Hi\-sto\-gram\-Alg}\xspace}
\newcommand{\Solution}{\mathsf{Solution}}
\newcommand{\Spread}{\Phi}
\DeclareMathOperator*{\argmax}{arg\,max}
\newtheorem{definition}{Definition}
\begin{document}

\title{Submodular Optimization Over Sliding Windows} 
\author{
 Alessandro Epasto \quad  Silvio Lattanzi\quad Sergei Vassilvitskii\quad Morteza Zadimoghaddam\\
Google\\
\{aepasto, silviol, sergeiv, zadim\}@google.com
}

\maketitle
\begin{abstract}
Maximizing submodular functions under cardinality con\-stra\-ints lies at
the core of numerous data mining and machine learning applications,
including data diversification, data summarization, and coverage problems.
In this work, we study this question in the context of data
streams, where elements arrive one at a time, and we want to
design low-memory and fast update-time algorithms that maintain a good solution. 
Specifically, we focus on the sliding window
model, where we are asked to maintain a solution that considers only
the last $W$ items.

In this context, we provide the first non-trivial algorithm that
maintains a provable approximation of the optimum using space sublinear in the size of the window.
In particular we give a  $\nicefrac{1}{3} - \epsilon$ approximation algorithm that uses space polylogarithmic in
the spread of the values of the elements,
$\Spread$, and linear in the solution size $k$ for any constant $\epsilon > 0$ . At the same
time, processing each element only requires a
polylogarithmic number of evaluations of the function itself. When a
better approximation is desired, we show a different algorithm that, at the cost of using more memory, provides a $\nicefrac{1}{2} - \epsilon$ 
approximation and allows a tunable trade-off between average update time and space. This algorithm matches the best known approximation guarantees for submodular optimization in insertion-only streams, a less general formulation of the problem.

We demonstrate the efficacy of the algorithms on a number of real
world datasets, showing that their practical performance far exceeds
the theoretical bounds. The algorithms preserve high quality solutions in streams with millions of items, while storing a
negligible fraction of them.
\end{abstract}

\section{Introduction}
\label{sec:intro}

Providing concise, timely, and accurate summaries is a critical task facing many modern data driven applications. In myriad scenarios, ranging from increasing diversity~\cite{mirrokniKDD2013} to influence maximization~\cite{KempeKDD2003}, this problem can be viewed as optimizing a submodular function subject to cardinality constraints.  Capturing the property of ``diminishing returns," submodular functions can almost be seen as a silver bullet in data mining and machine learning: they are general enough to model many practical situations, yet allow for simple, and efficient optimization algorithms.  

The classical algorithms for submodular function optimization~\cite{nemhauser1978analysis} were developed for the batch setting. The past decade, however, has seen an increased focus on data streams: situations where the input arrives one element at a time, rather than being presented all at once.  At the cost of sacrificing some accuracy, data streams allow for very fast updates, with the majority of algorithms taking only polylogarithmic time to produce an answer after processing each element. Even as data sizes grow into billions and trillions of items, data stream algorithms remain fast and efficient. 

It is therefore not surprising that submodular function optimization on data streams has received a lot of attention in the past few years~\cite{KDD14,kumar2015fast}. However, previous work has only focused on the insertion-only (or incremental) case where items are only added to, and never removed from the stream. This does not capture the recency constraint, often prevalent in practical applications, where we would like to optimize over the latest data, rather than all of the items seen during the duration of the stream. This is usually captured by considering an optimization over the last $W$ items in the stream in what is known as the {\em sliding window} model, introduced by Datar et al.~\cite{datar2002maintaining}. This model is more general and challenging than the insertion-only case, as the algorithm needs to take into account the items that disappear from the sliding window as time passes.  In this work we study, for the first time, the problem of optimizing submodular functions in the sliding windows model, and develop fast and memory-efficient algorithms with provable approximation guarantees.

\subsection{Applications} 
\label{sect:applications}
Before we proceed, we give two examples of submodular function maximization that have wide applications in practice: maximum coverage and active set selection. We will evaluate our algorithms on these scenarios in Section \ref{sect:exp}. 

{ \bf Maximum coverage.} The maximum coverage problem is a well known NP-Hard problem: given many sets over the same ground set, $U$, select $k$ of them that have the largest union, or jointly ``cover" as many elements as possible. In the sliding window formulation, the sets arrive one at a time, and we can only consider the $W$ most recent arrivals. 

This problem has numerous applications. For instance, the sets might represent content available in an online service (e.g. videos, items to purchase, ads, check-ins in a location-based system). Each set has an associated subset of interested users, our goal is to select $k$ sets to maximize the total number of people interested in at least one item. As relevance of items waxes and wanes, recency is a key factor, and items that first appeared long ago, are no longer considered material\footnote{We note that in practice, more nuanced variations of this basic problem are often used, including assigning weights to users, and allowing partial coverage of users, all of which can be cast as submodular maximization.}.

In other examples, the sets in the input might represent topics (or labels, tags, etc.) covered by a given item, and again we are interested in showing a limited number of items so that we cover as many topics of interest as possible. Other applications of maximum cover in insertion-only streams have been discussed, for instance in~\cite{saha2009maximum}. In our experiments in Section \ref{sect:exp} we show two simple applications of max coverage based on publicly-available data: maintaining a set of recently active points of interest using the Gowalla location-based social network check-ins, and analyzing DBLP co-authorship data to extract a set of recent researchers covering as many fields as possible.

{\bf Active set selection. } 
Another application of submodular maximization lies in the area of data summarization. In this context we want to extract a representative set of $k$ elements from an arbitrary set of items. This setup has many applications in explorative data analysis and visualization, as well as, in speeding up machine learning methods. For instance, in an online system receiving a stream of event updates (e.g. possible security alerts, news stories, etc.) we want to keep track of $k$ informative events to be shown for diagnostic and visualization purposes, or for conducting more in depth analysis.

Here, too, we only want to present recent items from the stream, as older events are less relevant. A concrete instantiation of this problem is that of active set selection with Informative Vector Machines (more details available in \cite{KDD14}), which consists of selecting a set of $k$ items which maximize a submodular function defined on the restricted kernel matrix over the selected items. More precisely let $K_{S,S}$ be the restricted kernel matrix over the items $s_1, \ldots s_{|S|} \in S$ i.e. $K_{S,S}(i,j) = \mathcal{K}(s_i,s_j)$ where $\mathcal{K}(s_i,s_j)$ is the similarity of items $i$ and $j$ according to some symmetric positive definite kernel function $\mathcal{K}$.  In the experimental evaluation, we use the settings of \cite{KDD14}: the items are points in a Euclidean space, $\mathcal{K}=\exp (-\|s_i - s_j\|^2_2/0.75^2)$, and the goal is to find $S$ that maximizes the log-determinant: $f(S) = \frac{1}{2} \log \det (\mathbf{I} + K_{S,S}),$
where $\mathbf{I}$ is the identity matrix of size $|S|$. 

\subsection{Our Contributions} 
In this work we give the first algorithms for monotone submodular function optimization subject to a cardinality constraint over sliding windows, prove bounds on their performance, and empirically demonstrate their effectiveness. Note that algorithms designed for insertion-only streams or off-line settings (e.g. the greedy algorithm) cannot be readily applied in the sliding window case, as items are removed from the window at each update. A na\"ive application of any such method would require at least $\Theta(W)$ time and space to process each new item for a window of size $W$, which is prohibitive. In contrast, we show that sublinear space and time are sufficient: 
\begin{compactitem}
\item In Section \ref{sect:algo-onethird} We give a $\nicefrac{1}{3}-\epsilon$ approximation algorithm that uses memory $O(k \log^2 (k\Spread)/\epsilon^2)$ and needs  only $O(\log^2 (k\Spread)/\epsilon^2)$ calls to the submodular function to process each element. (Here $\Spread$ is the ratio between maximum and minimum values of the submodular function, see Section~\ref{sec:prelim} for details.) The space and time requirements are optimal up to polylogarithmic factors. 
\item We then give an algorithm that achieves a better approximation ($\nicefrac{1}{2}-\epsilon$), at the cost of slower processing, and give a trade-off between update time and total space used by the algorithm (Section \ref{sect:algo-onehalf}). This algorithm matches the approximation guarantees of the best known insertion-only algorithm~\cite{KDD14}.
\item We describe practical considerations used to further improve the runtime of the algorithm in Section \ref{sect:implementation}.
\item In Section \ref{sect:results} we evaluate our algorithms on real world datasets, and empirically demonstrate their accuracy and scalability. 
\end{compactitem} 
Finally, we note that one challenging open problem in the sublinear algorithm literature is to
understand the relationship between different streaming models (see the list of Open Problems in Sublinear Algorithms~\cite{sublinear}). In this context, our results are a significant contribution toward the solution of the problem for submodular functions.
\section{Related work}
There are two lines of work that are related to our paper: literature on submodular optimization and sliding windows. We briefly describe the most relevant results in each area.

{\bf Submodular optimization.} The past decade has seen significant growth in applications of submodular optimization in multiple data mining and machine learning scenarios. The diminishing returns property captures the properties necessary to model the challenging task of selecting representatives among massive amounts of data. These representatives are used as seeds in influence maximization~\cite{KempeKDD2003} and information diffusion networks~\cite{bakshy2012role}, cluster centers in exemplar based clustering~\cite{FreyNIPS2005}, informative vectors in active set selection~\cite{KrauseNIPS2013}, diverse sets in coverage problems~\cite{mirrokniKDD2013}, and in document summarization~\cite{BilmesACL2011}.

The classic solution for submodular optimization with cardinality constraints is the well-known greedy
algorithm introduced by Nemhauser et al.~\cite{nemhauser1978analysis}. Recent years have seen a lot of attention paid to designing faster algorithms for influence maximization~\cite{KDD14,badanidiyuru2014fast,leskovec2007cost,citeulike:10637610}. The most relevant work for us is~\cite{KDD14} where Badanidiyuru 
et al.~introduce the first efficient streaming algorithm for submodular optimization. Specifically,  for the problem of monotone submodular function optimization subject to a cardinality constraint Badanidiyuru 
et al.~\cite{KDD14} give a $1/2 -\epsilon$ approximation while using memory $O(k\log(k)\epsilon^{-1})$ for any $\epsilon>0$. 
In this paper we build on the techniques introduced in that paper to design our algorithms.

{\bf Streaming on sliding windows.} The sliding window model has been introduced by Datar, 
Gionis, Indyk and Mowani in~\cite{datar2002maintaining}. After its introduction the model received a
lot of attention~\cite{DBLP:conf/soda/BravermanLLM16,Braverman07,DBLP:journals/algorithmica/ChanLLT12,gibbons2002distributed,DBLP:journals/algorithms/TingLCL11}. An important concept in this area of research is the concept of Smooth Histograms introduced 
in~\cite{Braverman07} by Braverman and Ostrovsky. Our $\nicefrac13 - \epsilon$ 
approximation algorithm can be seen as an extension of the Smooth Histograms for Submodular functions.  To the best of our knowledge no previous work has addressed the problem of submodular maximization in the sliding window setting with approximation guarantees. 
\section{Preliminaries}
\label{sec:prelim}
Let $V$ be a ground set of elements. A function $f : 2^V \rightarrow \mathbb{R}^{\geq 0}$ is said to be submodular, 
if for all sets $S \subset T \subset V$ and all elements $v \not\in T$, 
$$f(S \cup \{v\}) - f(S) \geq f(T \cup \{v\}) - f(T).$$
In other words, the additional benefit of element $v$ is no larger when added to $T \supseteq S$. To simplify notation, for an element $v \in V$, and set $S \subset V$, let 
$$f'_S(v) = f(S \cup \{v\}) - f(S),$$
denote the incremental value of adding element $v$ to set $S$. 

A submodular function $f$ is monotone, if for any $S \subseteq T$, $f(T) \geq f(S)$.  In this work we focus on optimizing monotone submodular functions, subject to a cardinality constraint. For $k \in \mathbb{Z}$, let 
$$f_k(V) =  \max_{S \subseteq V : |S| = k} f(S).$$

It is well known~\cite{nemhauser1978analysis} that the simple greedy algorithm that starts with $S = \emptyset$, and repeatedly adds the element $v$  that maximizes $f'_S(v)$  achieves a $(1 - \nicefrac{1}{e})$ approximation to the optimum solution. Moreover this approximation ratio is the best possible, unless $\mathcal{P} = \mathcal{NP}$. 

{\bf Streaming Algorithms.} 
Data streams are a common way to design algorithms for very large datasets, see ~\cite{aggarwal2007data,mcgregor2014graph} for a survey. In this setting, elements arrive one at a time, and the goal of the algorithm designer is to maintain a (nearly) optimal solution. A trivial approach is to store all of the elements, and recompute the solution from scratch every time. Such an approach is obviously inefficient, it requires both large memory (to store all of the elements), and large update time upon reading every element.  In evaluating streaming algorithms, we will focus on these two metrics.  
For a stream of length $n$, the goal is to find algorithms that require sublinear memory, and update time, with the gold standard having both be $O(\text{polylog} (n))$.  

In this work, we are specifically interested in the sliding window model over data streams. Consider a stream $v_1, v_2, \ldots$.
Without loss of generality, we assume no item of zero value is present, i.e. $f(\{ v_i\}) > 0, \forall i$. Notice that such items can be discarded without affecting the objective function value
because such $v_i$ have zero incremental value to every set. 
Let $\Delta = \max_{v \in V} f(\{v\})$ be the maximum value of a set containing a single element in $V$. 
We also let $$\Spread = \frac{\max_{v \in V} f(\{v\})}{\min_{v \in V} f(\{v\})},$$ be the ratio of maximum to minimum singleton values. Our algorithms do not need to know $\Spread$ (it only appears in space and computation upper bounds). Although we present the algorithms as they need to know $\Delta$, in Subsection~\ref{sect:implementation} we show how to relax this assumption without loss of generality. 

Let $W \in \mathbb{Z}$ be the size of the sliding window. At each point in time $t \geq W$  the {\em active window}, $A_t$, is the set that contains the last $W$ elements in the stream: $A_t = \{v_{t - W+1}, v_{t - W + 2}, \ldots, v_t\}$. For $t < W$, we let $A_t = \{v_1, v_2, $ $\ldots, v_t\}$. We are interested in computing sets $S_1, S_2, \ldots$ of cardinality $k$ such that at every time $t$, $f(S_t)$ is within a small constant factor of $f_k(A_t)$. 

Similarly to streaming algorithms, an obvious approach is to store the whole window $A_t$, and recompute the optimal function on $A_t$ at every time step. In this work we will show how to compute an approximately optimal solution to $f$ using much less space, and with much faster update time. 

\section{A $(\nicefrac{1}{3}-\epsilon)$-approximation algorithm}
\label{sect:algo-onethird}

In this section we present an algorithm that uses polylogarithmic memory and 
update time  to compute a $(\nicefrac{1}{3}-\epsilon)$-approximation for the submodular 
maximization problem with cardinality constraints.

A key ingredient in our analysis is the concept of Smooth Histograms introduced by Braverman
and Ostrovsky in~\cite{Braverman07}. Before presenting our solution, we briefly review
the main ideas presented in~\cite{Braverman07}.

\textbf{Smooth Histograms.}
The key idea behind smooth histograms is to identify and maintain a subset of indices $x_1$, $x_2$, $\ldots$, $x_s$, such that we only consider the intervals starting at $x_i$ and ending at $t$. If we can prove that one of these intervals leads to an approximately optimal solution, then we can proceed by running $s$ copies of a streaming approximation algorithm in parallel, one starting at each index $x_i$.  

The main challenge is in identifying the right set of indices. It is easy to show that simple ideas---for example evenly partitioning the window into $W/s$ equally spaced starting points, or using reservoir sampling to maintain $s$ random starting points---do not work, in particular because the partitioning must depend on the value of the objective function on the different sub-intervals.  

Braverman and Ostrovsky show that for a well behaved function, $g$, it is possible to maintain such a set of indices. The high level idea is to look at the function values, and insist that for any three successive indices, $x_{\ell - 1}, x_\ell, x_{\ell + 1}$ the value of $g(x_{\ell+1}, t) \leq (1 - \beta) g(x_{\ell-1}, t)$ for some constant $\beta$. Here $g(a,b)$ is the value of function $g$ on the interval $[a,b]$ of elements, i.e. $\{v_a, v_{a+1}, \ldots, v_b\}$. In this case the total number of indices is bounded by $O(\log_{1+\beta} H)$, where $H$ is the ratio between the maximum and minimum values of $g$.  However, the approximation guarantees only hold for a certain subset of functions. More precisely, 
\begin{definition}
A function $g$ is $(\alpha, \beta)$-smooth if for all indices $a < b < c < d$  we have that:
$$(1-\beta)g(a, c) \leq g(b, c) \implies (1-\alpha)g(a, d) \leq g(b, d).$$
\end{definition}
Braverman and Ovstrovsky then show how to maintain polylogarithmically many indices to get a $1-\alpha$ approximation to an $(\alpha, \beta)$ smooth function $g$. They further extend their results to the setting when $g$ cannot be computed exactly in a streaming setting, but can only be approximated to a factor of $(1 - \epsilon)$. They adapt the analysis (Theorems 2 and 3 in ~\cite{Braverman07}) to show that this results in a $1 - 5\epsilon$ approximation. 

Thus following their analysis, the resulting algorithm gives non-trivial results only when $\epsilon < \nicefrac{1}{5}$.
In our problem, we are interested in computing $g=f_k$, and there exists no $1-\epsilon$ approximation to estimate it. 
 For the submodular maximization problem 
with cardinality constraints, the best streaming algorithm achieves a $\nicefrac12$ approximation.
Furthermore, unless $\mathcal{P}=\mathcal{NP}$, there does not exist any better than 
$1-\nicefrac{1}{e}$ approximation algorithm for submodular maximization with cardinality
constraint~\cite{Feige98}. Thus, we cannot apply their techniques directly in our case. 

Nonetheless, in the rest of this section we show how one can use properties of
submodular functions to adapt the smooth histogram framework and obtain an efficient $(1/3 - \epsilon)$-approximation algorithm. 

\subsection{An insertion only algorithm}
\label{sect:algo-insertionly}

Our first building block is a streaming algorithm that can approximate $f_k$ efficiently. 
We present Algorithm~\ref{alg:StreamingAllThresholds}
(named \StreamAlg), which  is an extended version of ThresholdStream algorithm 
introduced in~\cite{kumar2015fast} and that uses similar techniques to the one
developed in~\cite{KDD14}.
Algorithm~\ref{alg:StreamingAllThresholds} takes a stream of elements $u_1, u_2, \dots$ (in our algorithm this stream is often a sub-stream of the original stream $v_1, v_2, \dots)$. 
Given a value of $\delta > 0$ which we will fix later, we consider a set of $m = \lfloor \log_{1+\delta} 2k \Delta/f(u_1) \rfloor$ thresholds, 
$$T = \left\{\frac{f(u_1)}{2k}, \frac{(1+\delta)f(u_1)}{2k}, \frac{(1 + \delta)^2f(u_1)}{2k}, \ldots, \frac{(1+\delta)^mf(u_1)}{2k}\right\}.$$ 

For each threshold $\tau \in T$, we maintain a feasible solution $S_\tau$ which is initialized with the empty set. At time $t$, when $u_t$ arrives, we add it to the solution if $\left|S_{\tau}\right| < k$ and $f'_{S_{\tau}}(u_t) \geq \tau$. At any time $t$ the current solution is the best among the candidate solutions $\{S\}_\tau$,  i.e. $\Solution_t = \max_{\tau} f(S_{\tau})$. The pseudocode is shown in Algorithm~\ref{alg:StreamingAllThresholds}. 

\begin{algorithm}[t!]
\label{alg:StreamingAllThresholds}
\caption{\StreamAlg}
\KwInput{Stream of elements $u_1, u_2, \cdots$, and $\delta$}\;
Let $m = \lfloor \log_{1+\delta} 2k\Delta/f(u_1) \rfloor$.\\
Let $T  = \{\frac{f(u_1)}{2k}, \frac{(1+\delta)f(u_1)}{2k}, \frac{(1 + \delta)^2f(u_1)}{2k}, \ldots, \frac{(1+\delta)^mf(u_1)}{2k}\}.$\\
\For{$t=1, 2, \cdots$}{
	\ForAll{$\tau \in T$}{
		\If{$f'_{S_{\tau}}(u_t) \geq \tau \wedge |S_{\tau}| < k$}{
			Add $u_t$ to $S_{\tau}$
		}
	}
	$\Solution_t \gets \max_{\tau} f(S_{\tau})$\;
}
\end{algorithm}
We now give a lower bound on the performance of Algorithm ~\ref{alg:StreamingAllThresholds}. For the analysis, let $h(A)$ be the output of 
Algorithm~\ref{alg:StreamingAllThresholds} on the stream $A$ of elements. 

\begin{lemma}\label{lem:ThresholdGuarantee}
For any non-empty set $B \subseteq A$ with $|B| = k' \leq k$, we have 
$h(A) \geq (1-\delta)\frac{k}{k+|B|}f(B)$. Equivalently, for any $1 \leq k' \leq k$, 
$h(A) \geq (1-\delta)\frac{k}{k+k'}f_{k'}(A).$
\end{lemma}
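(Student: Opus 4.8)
The plan is to exhibit a single threshold $\tau^\star\in T$ whose solution $S_{\tau^\star}$ already satisfies $f(S_{\tau^\star})\ge(1-\delta)\frac{k}{k+k'}f(B)$; since $h(A)=\max_{\tau\in T}f(S_\tau)$, this gives the first inequality, and the ``equivalently'' restatement then follows by taking the maximum over all non-empty $B\subseteq A$ with $|B|=k'$, because $h(A)$ does not depend on $B$ and $f_{k'}(A)=\max_{|B|=k'}f(B)$.

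The starting point is a dichotomy for an arbitrary fixed threshold $\tau$. If $|S_\tau|=k$ at the end of the stream, then each element ever inserted into $S_\tau$ contributed marginal value at least $\tau$, so $f(S_\tau)\ge k\tau$. Otherwise $|S_\tau|<k$, and then every $b\in B\setminus S_\tau$ was rejected because its marginal value was below $\tau$ relative to the contents of $S_\tau$ at that time; since that set is a subset of the final $S_\tau$, submodularity gives $f'_{S_\tau}(b)<\tau$ at the end as well. Combining monotonicity, $f(B)\le f(B\cup S_\tau)$, with subadditivity of marginals, $f(B\cup S_\tau)\le f(S_\tau)+\sum_{b\in B\setminus S_\tau}f'_{S_\tau}(b)\le f(S_\tau)+k'\tau$, yields $f(S_\tau)\ge f(B)-k'\tau$.

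The two lower bounds $k\tau$ and $f(B)-k'\tau$ meet at $\tau=\frac{f(B)}{k+k'}$, both equal to $\frac{k}{k+k'}f(B)$, so I take $\tau^\star$ to be the largest element of $T$ with $\tau^\star\le\frac{f(B)}{k+k'}$. Since consecutive thresholds differ by a factor $1+\delta$, either the next threshold exceeds $\frac{f(B)}{k+k'}$, giving $\tau^\star>\frac{f(B)}{(1+\delta)(k+k')}\ge(1-\delta)\frac{f(B)}{k+k'}$, or $\tau^\star$ is the top threshold $(1+\delta)^m f(u_1)/(2k)$, which by the choice of $m$ exceeds $\Delta/(1+\delta)$ and is still at least $(1-\delta)\frac{f(B)}{k+k'}$ because $f(B)/(k+k')\le f(B)/k'\le\Delta$ by subadditivity. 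Either way $\tau^\star\in\big[(1-\delta)\tfrac{f(B)}{k+k'},\,\tfrac{f(B)}{k+k'}\big]$, and feeding this into the dichotomy gives $f(S_{\tau^\star})\ge\min\big(k\tau^\star,\,f(B)-k'\tau^\star\big)\ge(1-\delta)\frac{k}{k+k'}f(B)$.

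The remaining case is the degenerate one where no threshold is $\le\frac{f(B)}{k+k'}$, i.e. $\frac{f(B)}{k+k'}<\frac{f(u_1)}{2k}$, which forces $f(B)<f(u_1)$. But $u_1$'s marginal gain over $\emptyset$ equals $f(u_1)$, so it is accepted into $S_\tau$ for every threshold $\tau\le f(u_1)$, in particular the smallest one, and hence $h(A)\ge f(S_{\tau_{\min}})\ge f(u_1)>2k\,\frac{f(B)}{k+k'}>(1-\delta)\frac{k}{k+k'}f(B)$. The only real care needed is this boundary bookkeeping, and in particular splitting the dichotomy on whether $|S_\tau|$ reaches $k$ (rather than directly on the elements of $B$), so that a rejection of $b\in B$ can be unambiguously blamed on the threshold test and not on the cardinality cap; everything else is routine.
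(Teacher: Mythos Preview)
Your proof is correct and follows essentially the same line as the paper's: pick a threshold $\tau$ within a $(1-\delta)$ factor of $f(B)/(k+k')$, then split on whether $|S_\tau|=k$ (giving $f(S_\tau)\ge k\tau$) or not (giving $f(S_\tau)\ge f(B)-k'\tau$ via submodularity). The only cosmetic difference is that the paper works directly with $f_{k'}(A)$ rather than an arbitrary $B$; since $f_{k'}(A)\ge f(\{u_1\})$ always holds, the target $f_{k'}(A)/(k+k')$ automatically lies in $[f(u_1)/2k,\Delta/2]$ and hence inside the range of $T$, so the paper never needs your separate boundary cases for ``$\tau^\star$ is the top threshold'' or ``no threshold is small enough.'' Your extra bookkeeping is harmless and correct, just slightly longer than necessary.
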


\begin{proof}
We first note that since $f_{k'}(A)$ is at least $f(B)$ by definition of $f_{k'}$, we only need to prove $h(A) \geq (1-\delta)\frac{k}{k+k'}f_{k'}(A)$.
By definition of $\Delta$ and submodularity of $f$, we have that 
$f_{k'}(A) \leq k' \Delta$, and therefore $f_{k'}(A)/(k+k')$ is at most $\Delta/2$.  
One the other hand, we know $f_{k'}$ is at least $f(u_1)$,  consequently,  $f_{k'}(A)/(k+k')$ is at least $f(u_1) /2k $. 
Therefore there exists some $(1-\delta)f_{k'}(A)/(k+k') \leq \tau \leq f_{k'}(A)/(k+k')$  in set  $T$. 
 We proceed to prove the claim for $S_\tau$ which lower bounds the value of $h(A)$. 

There are two cases. If the size of $|S_{\tau}|$ is $k$, then:
$$h(A) \geq  f(S_{\tau}) \geq k \tau \geq (1-\delta)k f_{k'}(A)/(k+k').$$ 

Otherwise, consider an element $u_t \in A \setminus S_{\tau}$, which was not selected. Then
$f'_{S^{t-1}_{\tau}}(u_t) < \tau$ where $S^{t-1}_{\tau}$ is the subset of elements of $S_{\tau}$ that arrive before time $t$. By submodularity, we also have $f'_{S_{\tau}}(u_t) < \tau$. We conclude by:
$$f_{k'}(A) - f(S_\tau) \leq \sum_{x \in f_{k'}(A) \setminus S_{\tau}} f'_{S_{\tau}}(x) \leq \frac{k'}{k+k'}f_{k'}(A),$$
where the first inequality follows from the property of submodular functions, see for example Lemma 5 of \cite{bateni2010submodular}. Therefore,  $f(S_{\tau}) \geq \left(1-\nicefrac{k'}{(k+k')}\right)f_{k'}(A)$, 
which proves the claim. 
\end{proof}

\subsection{The sliding window algorithm}
\label{sect:algo-smooth}

Now we are ready to formulate our $(\nicefrac13-\epsilon)$-approximation algorithm. To
solve our problem we introduce the concept of Submodular Smooth Histograms
inspired by the Smooth Histograms in~\cite{Braverman07}.

A Submodular Smooth Histogram consists of $s$ indices $x_1, x_2, \cdots, x_s$
where the last index $x_s$ is equal to the current time, $t$ and represents the end of
the sliding window. At initialization, $t = 1$, and we set $s=1$, $x_1=1$. 
 
During the algorithm we run $s$ instances of our streaming algorithm concurrently. 
Algorithm $\StreamAlg_i$ is responsible for processing the stream that 
starts with $x_i$ and processes all elements after that unless we decide 
to terminate the algorithm.  At time $t$, 
when an element $v_t$ arrives, it is processed by all $s$ instances 
of \StreamAlg.

Furthermore we also initiate a new instance of \StreamAlg that 
is responsible for the stream that starts with $v_t$. 
Formally, we increment $s$ and set the new $x_s=t$. 

We now show how to update the indices
$x_1, x_2, \cdots, x_s$ to keep $s$ bounded while keeping a good
approximation.  
Recall that $h(A)$ 
is the output of \StreamAlg on window $A$. Abusing notation slightly, we also let $h(a,b)$ be the 
value of function $h$ on the window that starts with index $a$ and ends with index $b$. 
We have two main operations to maintain the indices.
First, if index $x_{i+1}$ has expired: i.e. $x_{i+1} < t - W + 1$, then we remove index $x_i$ for any $0 < i < s$. Second, whenever we have $h(x_{i+2},t) \geq (1-\beta)h(x_i,t)$ for some $0 < i < s$ we remove index $x_{i+1}$. Any time an index is removed the corresponding algorithm is terminated. At any point in time $t$ the current solution $ \Solution_t = h(x_1,t)$ if $x_1$ is not expired and $h(x_2,t)$ otherwise.
In Algorithm~\ref{alg:submodularSmooth} we give the pseudocode
that maintains Submodular Smooth Histograms. 

\begin{algorithm}[t!]
\label{alg:submodularSmooth}
\caption{Submodular Smooth Histograms Algorithm}
\KwInput{A stream of elements $v_1, v_2, \ldots$, parameters $\beta, \delta$, Window size $W$}\;
Initialize $s \gets 0$\;
\ForAll{$t \in \{1, 2, \ldots\}$}{
	$s \gets s+1$\;
	$x_s \gets t$\;
	Initiate a new instance of Algorithm~\ref{alg:StreamingAllThresholds} that processes the stream starting from $x_s$\;
	// Keep at most one expired index.\\
	\ForAll{ $0 < i < s$}{
		\If {$x_{i+1} < t - W + 1$}{
			Remove $x_i$, terminate Algorithm~\ref{alg:StreamingAllThresholds} associated with $x_i$, and shift other indexes accordingly\;
			$s \gets s - 1$\;
		}
	}
	Pass $v_t$ to all $s$ running instances of Algorithm~\ref{alg:StreamingAllThresholds}\;
	// Delete indices that are no longer useful.\\
	\While{$\exists 0 < i < s: h(x_{i+2},t) \geq (1-\beta)h(x_i,t)$}{
		Remove $x_{i+1}$, terminate Algorithm~\ref{alg:StreamingAllThresholds} associated with $x_{i+1}$, and shift the remaining indexes accordingly\;
		$s \gets s - 1$\;
	}
	\If{$x_1 = \max(1, t - W + 1) $}{
		$\Solution_t \gets h(x_1,t)$
	} \Else{$\Solution_t  \gets h(x_2,t)$}
} 
\end{algorithm}

We first show the main property of Submodular Smooth Histograms which is maintained by 
Algorithm~\ref{alg:submodularSmooth}. 

\begin{lemma}\label{lem:SubmodularSmoothProperty}
For any time $t$ and $1 \leq i < s$, we either have $x_{i+1} = x_i +1$ or there exists 
some $t' \leq t$ such that $h(x_{i+1}, t') \geq (1-\beta)h(x_i, t')$.
\end{lemma}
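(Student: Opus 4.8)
The plan is to argue by induction on time $t$, tracking how each pair of consecutive indices $(x_i, x_{i+1})$ comes into existence and how it can change. The statement is a monotone invariant: once it holds for a pair it should continue to hold, because the only "witness" we ever need — some past time $t'$ with $h(x_{i+1},t') \geq (1-\beta)h(x_i,t')$ — is a statement about the past and does not evaporate as $t$ grows. So the real work is to check that every operation of Algorithm~\ref{alg:submodularSmooth} that creates or modifies an adjacency preserves the invariant.

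First I would set up the base/creation case. A new index $x_s = t$ is appended at the start of each step, creating the new adjacency $(x_{s-1}, x_s)$. At the moment of creation either $x_{s-1} = t-1 = x_s - 1$ (if the previous newest index was from the immediately preceding step and survived), in which case the first alternative $x_{i+1} = x_i + 1$ holds trivially; or, if earlier indices were deleted, I need to check that the deletion rule itself supplies the witness — which it does, since the \textbf{while}-loop deletes $x_{i+1}$ precisely when $h(x_{i+2},t) \geq (1-\beta)h(x_i,t)$, so after such a deletion the newly-adjacent pair (the old $x_i$ and old $x_{i+2}$) inherits exactly the witnessing inequality at the current time $t' = t$.

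Next I would handle the two ways an existing adjacency is destroyed and replaced. (1) Expiration: when $x_{i+1} < t-W+1$ we remove $x_i$; this merges $x_{i-1}$ with $x_{i+1}$. Here I invoke the inductive hypothesis on the pair $(x_{i-1}, x_i)$ that existed before deletion — if it had a witness $t'$ with $h(x_i,t')\geq(1-\beta)h(x_{i-1},t')$... but wait, that is a witness for the wrong pair; I actually need the hypothesis on $(x_i,x_{i+1})$ giving $h(x_{i+1},t')\geq(1-\beta)h(x_i,t')$, and then combine with monotonicity of $h$ in its second argument (a longer prefix interval yields a larger $f_k$, hence larger $h$, so $h(x_{i-1},\cdot)\geq h(x_i,\cdot)$ is the wrong direction) — so instead I should use that $h$ is monotone in its \emph{left} endpoint in the sense that starting earlier can only help, giving $h(x_{i-1},t') \geq h(x_i, t')$; combined with $h(x_{i+1},t')\geq (1-\beta)h(x_i,t')$ this does not immediately give the bound against $h(x_{i-1},t')$. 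The clean fix is: if $x_i = x_{i-1}+1$ then after deletion $x_{i+1}$ may or may not equal $x_{i-1}+1$, and if not we fall back on the witness for $(x_i,x_{i+1})$ plus the fact that $h(x_{i-1},t')\geq h(x_i,t') \geq \frac{1}{1-\beta}\cdot(1-\beta)h(x_i,t')$... (2) Useful-deletion in the \textbf{while}-loop: already handled above, the current time $t$ is the witness.

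The main obstacle, as the above makes clear, is the expiration case: after deleting an interior index, the two indices that become newly adjacent were not adjacent before, so neither pair's stored witness directly bounds the new pair. The resolution is to observe that we only ever delete $x_i$ when $x_{i+1}$ has expired, so in fact $x_{i+1} \le t - W + 1 \le x_i$ cannot happen unless many steps passed, and crucially we may assume inductively that either $x_i = x_{i-1}+1$ — in which case the merged pair is $(x_{i-1}, x_{i+1})$ with $x_{i-1} = x_i - 1 < x_{i+1}$, and we use the witness from $(x_i, x_{i+1})$ together with $h(x_{i-1}, t') \geq h(x_i, t')$, noting $h(x_{i-1},t') \ge h(x_i,t') \ge h(x_{i+1},t') $ is again the wrong way — OR $(x_{i-1},x_i)$ has a witness $t''$, but we need to transport it forward. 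The honest statement of the difficulty: I will need to combine the inductive witnesses for \emph{both} broken pairs and use monotonicity of $h$ carefully to produce a single witness for the merged pair, and getting the direction of the $h$-monotonicity inequalities right (prefix-interval $h$ is monotone increasing in the second argument and, more subtly, the relevant comparison across left endpoints) is exactly where care is required; I expect that tracking a slightly strengthened invariant — that the witness time can be taken to be the creation time of $x_{i+1}$ — will make the expiration merge go through cleanly.
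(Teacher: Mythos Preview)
Your proposal correctly handles the while-loop deletion and the insertion of a fresh last index, but it manufactures a difficulty that is not there and then fails to resolve it. You single out expiration as ``the main obstacle'' and end with a vague hope that a strengthened invariant will make the merged-pair argument go through. It will not, and it need not: the expiration rule \emph{never creates a new adjacency between two surviving indices}.

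Here is why. Maintain the easy side invariant that at the end of every time step $t$, all of $x_2,\dots,x_s$ are unexpired (only $x_1$ may be expired). Then at time $t{+}1$, after appending $x_{s+1}=t{+}1$, the indices $x_2,\dots$ still satisfy $x_j \ge t-W$, and since they are strictly increasing at most one of them (namely $x_2$, if $x_2 = t-W$) can satisfy $x_2 < (t{+}1)-W+1$. Hence the expiration loop fires for $i=1$ only and removes $x_1$; the old pair $(x_2,x_3)$ simply becomes the new $(x_1,x_2)$, with no new adjacency formed. Your attempted merge of $(x_{i-1},x_i)$ and $(x_i,x_{i+1})$ witnesses --- where you correctly noticed the monotonicity inequalities point the wrong way --- is addressing a case that cannot occur.

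With that observation, the paper's argument is a two-line case split rather than an induction. Fix the current pair $(x_i,x_{i+1})$ and let $t'$ be the first time these two particular values became neighbours. Either (a) $x_{i+1}$ had just been appended as the new last index; since the last index is never removed (neither loop touches $x_s$), its predecessor is the previous step's last index, so $x_i = x_{i+1}-1$. Or (b) some $x'$ strictly between them was deleted at time $t'$; by the above this deletion must have come from the while-loop, whose guard is exactly $h(x_{i+1},t') \ge (1-\beta)h(x_i,t')$, giving the witness. There is nothing more to do --- no transport of witnesses, no combined invariant.
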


\begin{proof}
Let $t'$ be the first time $x_{i+1}$ becomes the successor of $x_i$ in the smooth 
histogram. If this event occurred due to the removal of some  $x'$ that was 
between $x_i$ and $x_{i+1}$,  the condition of the while loop 
 ensures that 
$h(x_{i+1},t') \geq (1-\beta)h(x_i,t')$. Otherwise, $x_{i+1}$ became the successor 
of $x_i$ when $x_{i+1}$ was added to the smooth histogram. But we never remove
 the last index of the histogram, so the last index was equal to the previous end of
  sliding window $x_{i+1}-1$, 
  therefore $x_i = x_{i+1}-1$.
\end{proof}

We are ready to show that with a judicious choice of $\delta$ and $\beta$, 
Algorithm~\ref{alg:submodularSmooth} is a $(\nicefrac13-\epsilon)$-approximation algorithm. 

\begin{theorem}
For any $\epsilon > 0$, Algorithm~\ref{alg:submodularSmooth} with 
$\beta = \delta = \epsilon/2$ is a $(\nicefrac13-\epsilon)$-approximation for 
submodular maximization with a cardinality constraint in sliding window model. 
\end{theorem}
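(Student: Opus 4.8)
The plan is to show that at every time $t$ the maintained value satisfies $\Solution_t \ge (\nicefrac13-\epsilon)f_k(A_t)$, by a case analysis on the leftmost retained index $x_1$. First I would record a structural invariant of Algorithm~\ref{alg:submodularSmooth}: $x_1 \le \max(1,t-W+1)$ always. Indeed the while-loop only ever deletes indices $x_{i+1}$ with $i>0$, hence never $x_1$, and the expiry step deletes $x_1$ only when $x_2 < t-W+1$, in which case the new $x_1$ (the old $x_2$) is itself strictly before the current window start; between deletions $x_1$ is constant while the window start is nondecreasing, and initially there is equality. Since the expiry step moreover leaves at most one expired index, we get two exhaustive cases: (i) $x_1 = \max(1,t-W+1)$; and (ii) $x_1 < t-W+1 \le x_2$ (here necessarily $t>W$, so $s\ge 2$ and $x_2$ exists), in which case $[x_2,t]\subseteq A_t\subseteq[x_1,t]$.

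In case (i), $\Solution_t=h(x_1,t)$ equals $h(A_t)$, so Lemma~\ref{lem:ThresholdGuarantee} with $k'=k$ gives $\Solution_t\ge (1-\delta)\frac{k}{2k}f_k(A_t)=\frac{1-\delta}{2}f_k(A_t)$, which for $\delta=\epsilon/2$ is comfortably at least $(\nicefrac13-\epsilon)f_k(A_t)$. The work is in case (ii), where $\Solution_t=h(x_2,t)$ but $[x_2,t]$ may be a proper subwindow of $A_t$. Here I would invoke Lemma~\ref{lem:SubmodularSmoothProperty} for the consecutive pair $x_1,x_2$: if $x_2=x_1+1$, then (using $x_1<t-W+1\le x_2$ and integrality) $x_2=t-W+1$, so $h(x_2,t)=h(A_t)$ and the bound of case (i) applies; otherwise there is a time $t'\le t$, necessarily $t'\ge x_2$, with $h(x_2,t')\ge(1-\beta)h(x_1,t')$.

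Assume the latter. Let $S^*$ be an optimal $k$-element subset of $A_t$, and split it into the disjoint parts $B_1=S^*\cap[x_1,x_2-1]$ and $B_2=S^*\cap[x_2,t]$ (a partition of $S^*$ since $S^*\subseteq[x_1,t]$). Because $x_2-1\le t'$, all of $B_1$ has been digested by time $t'$ by the instance of Algorithm~\ref{alg:StreamingAllThresholds} started at $x_1$, so Lemma~\ref{lem:ThresholdGuarantee} for the window $[x_1,t']$ gives $f(B_1)\le\frac{k+|B_1|}{(1-\delta)k}h(x_1,t')$; and since each instance of Algorithm~\ref{alg:StreamingAllThresholds} only ever adds elements and $f$ is monotone, $h(x_1,\cdot)$ and $h(x_2,\cdot)$ are nondecreasing in their right endpoint, whence $h(x_1,t')\le\frac{1}{1-\beta}h(x_2,t')\le\frac{1}{1-\beta}h(x_2,t)$. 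Lemma~\ref{lem:ThresholdGuarantee} for the window $[x_2,t]$ gives directly $f(B_2)\le\frac{k+|B_2|}{(1-\delta)k}h(x_2,t)$. Writing $k_2=|B_2|$ and using $|B_1|\le k-k_2$, subadditivity $f(S^*)\le f(B_1)+f(B_2)$ yields $f(S^*)\le\frac{h(x_2,t)}{(1-\delta)k}\bigl(\frac{2k-k_2}{1-\beta}+k+k_2\bigr)$; the bracket is decreasing in $k_2$, so its maximum over $0\le k_2\le k$ is at $k_2=0$, giving $f_k(A_t)=f(S^*)\le\frac{3-\beta}{(1-\delta)(1-\beta)}h(x_2,t)$. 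Hence $\Solution_t\ge\frac{(1-\delta)(1-\beta)}{3-\beta}f_k(A_t)$, and plugging in $\delta=\beta=\epsilon/2$ a short computation ($\frac{(1-\epsilon/2)^2}{3-\epsilon/2}\ge\frac{1-\epsilon}{3}\ge\nicefrac13-\epsilon$, the first inequality being $\frac\epsilon2+\frac{\epsilon^2}{4}\ge 0$ after clearing the positive denominator) finishes case (ii).

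The step I expect to be the crux is case (ii): one must notice that $h$ is monotone in its right endpoint so that the $(1-\beta)$-comparison established at the ``birth time'' $t'$ of the pair $x_1,x_2$ can be transported forward to the current time $t$, that the prefix $[x_1,x_2-1]$ of the longer stream is entirely processed by time $t'$ (so Lemma~\ref{lem:ThresholdGuarantee} applies on $[x_1,t']$), and that the two invocations of Lemma~\ref{lem:ThresholdGuarantee} must be balanced by tracking $|B_1|+|B_2|\le k$ — it is precisely this bookkeeping, rather than the crude bound $\frac{k}{k+|B_i|}\le\frac12$, that produces the constant $\nicefrac13$ instead of $\nicefrac14$.
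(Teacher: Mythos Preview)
Your proof is correct and follows the same strategy as the paper: invoke Lemma~\ref{lem:SubmodularSmoothProperty} for the pair $x_1,x_2$, use monotonicity of $h$ in its right endpoint to carry the $(1-\beta)$-comparison from $t'$ forward to $t$, apply Lemma~\ref{lem:ThresholdGuarantee} to each piece of a two-part split of the optimum, and combine via subadditivity while tracking $|B_1|+|B_2|\le k$ to extract the constant $\nicefrac13$. The differences are purely tactical. The paper splits $OPT$ at the time $t'$ (elements arriving by $t'$ versus after) and then lower-bounds $h(x_2,t)$ by the \emph{maximum} of the two resulting estimates, reducing to the inequality $\max\bigl(\tfrac{1-\mu}{2k-k_2},\tfrac{\mu}{k+k_2}\bigr)\ge\tfrac{1}{3k}$; you split at the index $x_2$ (elements outside versus inside $[x_2,t]$) and instead upper-bound $f(S^*)$ by the \emph{sum} of the two pieces, then maximize the resulting linear expression over $k_2$. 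Your split is arguably the more natural one (it isolates exactly the part of $S^*$ that the $x_2$-instance never sees), and the sum-then-optimize step sidesteps the min--max manipulation; it also yields the marginally sharper factor $\frac{(1-\delta)(1-\beta)}{3-\beta}$ versus the paper's $\frac{(1-\delta)(1-\beta)}{3}$. But the two arguments are the same in substance.
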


\begin{proof}
Let $x_1$ and $x_2$ (if it exists) be the first two indices of the smooth 
histogram right after the update operations are done for a newly 
arrived element $v_t$. 
Note that the start of the active window $A_t$ is in the range $[x_1, x_2]$. 
Lemma~\ref{lem:SubmodularSmoothProperty} implies that either 
$x_2=x_1+1$ or $h(x_2,t') \geq (1-\beta)h(x_1, t')$ at some $t' \leq t$. 
If $x_2 = x_1+1$, the start of $A_t$ is be equal to either $x_1$ 
or $x_2$. In this case, we have calculated $h(A_t)$ the result of 
\StreamAlg on window $A_t$ and Algorithm~\ref{alg:submodularSmooth} 
will return $h(A_t)$ as the result. Using Lemma~\ref{lem:ThresholdGuarantee},
we have $h(A_t) \geq (1-\delta)\frac{f_k(A_t)}{2}$ which proves the claim. We note that if $x_2$ does not exist, the start of the active window is $x_1$ and the claim is proved in a similar manner. 

In the other case, we have $h(x_2,t') \geq (1-\beta)h(x_1, t')$ for some
$t' \leq t$. Now if $h$ was $(\alpha, \beta)$-smooth we would be done; in the remaining part 
of the proof we show how to use submodularity instead of smoothness
to prove the claim.

Let $OPT$ be the optimal solution on the interval $(x_1, t)$, formally: 
$$OPT = \argmax_{S \subseteq \{v_{x_1}, v_{x_1+1}\ldots, v_{t}\} \wedge |S| \leq k} f(S).$$ By definition of $f_k$,
$f(OPT) \geq f_k(A_t)$. 

We begin by splitting $OPT$ into two sets, those elements appearing before and after $t'$. 
Let $OPT_1 = $ $OPT \cap \{v_{x_1}, v_{x_1+1} $ $\ldots, v_{t'}\}$ and $OPT_2 = OPT \cap \{v_{t'+1}, \ldots, v_t\}$. Let $k_1 = |OPT_1|$ and $k_2  = |OPT_2|$.  Similarly, let $f_1 = f(OPT_1)$ and $f_2 = f(OPT_2)$. 
By submodularity,
\begin{equation}
\label{eqn:sub}
f(OPT) \leq f(OPT_1) + f(OPT_2).
\end{equation}

Moreover, Lemma~\ref{lem:ThresholdGuarantee} implies that 
\begin{equation}
\label{eqn:split}
h(x_1, t') \geq (1 - \delta)\frac{kf_1}{k + k_1} \text{ and } h(x_2, t) \geq (1 - \delta)\frac{kf_2}{k + k_2}.
\end{equation}
By monotonicity of Algorithm ~\ref{alg:StreamingAllThresholds}, we have: $h(x_2, t) \geq h(x_2, t') \geq (1 - \beta) h(x_1, t')$.
We can now bound $h(x_2, t)$
 \begin{align*} 
 &\geq (1-\beta)(1-\delta)\max\left(\frac{k}{k+k_1}f_1, \frac{k}{k+k_2}f_2\right)\\
 &\geq k(1 - \epsilon) \max \left(\frac{f(OPT) - f_2}{k + k_1}, \frac{f_2}{k + k_2} \right),
 \end{align*}
 where the first inequality follows by Equation~\ref{eqn:split}, the fact that $h(x_2, t) \geq (1 - \beta) h(x_1, t')$,
 and the setting of $\beta$ and $\delta$, and the second from Equation~\ref{eqn:sub}.
 
 For ease of notation, let $\mu = f_2/f(OPT)$. Clearly $\mu \in [0, 1]$. It is possible to verify that 
 \begin{equation}
 \label{eqn:mu}
 \max\left(\frac{1 - \mu}{2k - k_2}, \frac{\mu}{k+k_2}\right) \geq \frac{1}{3k},
 \end{equation} 
 as the maximum is achieved at $k_2 = 3k\mu - k$.  Continuing to bound $h(x_2, t)$:
\begin{align*}
&\geq k (1 - \epsilon) f(OPT) \max \left(\frac{1 - \mu}{2k - k_2}, \frac{\mu}{k + k_2}\right) \\
&\geq k (1 - \epsilon) f(OPT) \frac{1}{3k} \geq \frac{1}{3} (1 - \epsilon) f(OPT)  \geq \frac{1}{3} (1 - \epsilon) f(A_t), 
\end{align*}
where the first inequality follows from definition of $\mu$, and the second from Equation \ref{eqn:mu}; which concludes the proof. 
 \end{proof}

We now state a bound on the memory and the update time of Algorithm ~\ref{alg:submodularSmooth}. 
\begin{theorem}
Algorithm ~\ref{alg:submodularSmooth} with 
$\beta = \delta = \epsilon/2$ has an update time of $O(L\log^2(k\Spread)/\epsilon^2)$ per element and uses memory $O(k\log^2(k\Spread)/\epsilon^2)$ where $L$ is an upper bound on the time for each evaluation of function $f$. 
\end{theorem}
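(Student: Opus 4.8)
The plan is to write both quantities as a product of two factors: the number $s$ of \StreamAlg instances that the Submodular Smooth Histogram keeps at any time, and the memory (resp.\ per-element time) that a single instance needs. The claimed bounds then follow by multiplication, plus a short amortization argument for the index-maintenance overhead of Algorithm~\ref{alg:submodularSmooth}.

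First I would bound a single instance. For the instance responsible for a sub-stream whose first element is $u_1$, the number of thresholds is $m = \lfloor \log_{1+\delta} 2k\Delta/f(u_1)\rfloor$. Since zero-value elements are discarded, $f(u_1) \ge \min_{v} f(\{v\}) = \Delta/\Spread$, so $2k\Delta/f(u_1) \le 2k\Spread$ and hence $m = O(\log(k\Spread)/\delta) = O(\log(k\Spread)/\epsilon)$ with $\delta = \epsilon/2$. Each threshold stores a feasible set of size at most $k$, so one instance uses $O(k\log(k\Spread)/\epsilon)$ memory; processing one arriving element costs one marginal-gain evaluation $f'_{S_\tau}(u_t)$ per threshold, i.e.\ $O(L\log(k\Spread)/\epsilon)$ time, plus the $O(1)$ update of $\Solution_t$.

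Next I would bound $s$, which I expect to be the crux. After the \textbf{while} loop of Algorithm~\ref{alg:submodularSmooth} finishes at time $t$, for every valid $i$ we must have $h(x_{i+2},t) < (1-\beta)h(x_i,t)$, since otherwise $x_{i+1}$ would have been removed. Hence the values along the odd-indexed starting points $h(x_1,t), h(x_3,t), h(x_5,t),\dots$ decay geometrically with ratio $<1-\beta$, and likewise for the even-indexed ones, so it suffices to bound the ratio between the largest and smallest value of $h$ at time $t$. For the ceiling, $h(x_1,t) \le f_k(\{v_{x_1},\dots,v_t\}) \le k\Delta$ because every $S_\tau$ is feasible and by the definition of $\Delta$ together with submodularity. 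For the floor, applying Lemma~\ref{lem:ThresholdGuarantee} with $B$ a highest-value singleton in the window gives $h(x_i,t) \ge (1-\delta)\frac{k}{k+1}\min_v f(\{v\}) = \Omega(\Delta/\Spread)$. Therefore the number of odd-indexed (resp.\ even-indexed) indices is $O(\log_{1/(1-\beta)}(k\Spread)) = O(\log(k\Spread)/\beta) = O(\log(k\Spread)/\epsilon)$, and so $s = O(\log(k\Spread)/\epsilon)$.

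Combining, the memory is $s \cdot O(k\log(k\Spread)/\epsilon) = O(k\log^2(k\Spread)/\epsilon^2)$, and the per-element work is $s$ instances each doing $O(L\log(k\Spread)/\epsilon)$ work, i.e.\ $O(L\log^2(k\Spread)/\epsilon^2)$. The remaining bookkeeping is lower order: evaluating the \textbf{while}-loop condition only reads the already-maintained $h(x_i,t)=\Solution$ values ($O(1)$ per comparison, $O(s)$ per step), removing the at most one expired index and shifting/terminating instances is $O(s)$ per step, and since exactly one index is created per time step the total number of index removals over a length-$n$ stream is at most $n$, so this cost is $O(1)$ amortized per element. The one place where care is needed is the bound on $s$: one must extract the geometric-decay invariant from the deletion rule of Algorithm~\ref{alg:submodularSmooth} and pin down both the $O(k\Delta)$ upper bound and the $\Omega(\Delta/\Spread)$ lower bound on $h$, the latter being exactly where Lemma~\ref{lem:ThresholdGuarantee} is invoked.
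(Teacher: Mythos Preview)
Your proposal is correct and follows essentially the same approach as the paper: bound the number of instances $s$ via the geometric-decay invariant $h(x_{i+2},t) < (1-\beta)h(x_i,t)$ together with the range $[\min_v f(\{v\}), k\Delta]$ for $h$, bound each instance by its $O(\log(k\Spread)/\delta)$ thresholds, and multiply. The only cosmetic differences are that the paper states the lower bound on $h$ directly rather than invoking Lemma~\ref{lem:ThresholdGuarantee}, and it bounds the while-loop bookkeeping by a worst-case $O(s^2)$ per step instead of your amortization (note your ``$O(1)$ amortized'' for removals should really be $O(s)$ amortized once you count the shift, but either way this is dominated by the $O(Ls|T|)$ term).
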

\ifjournal
\begin{proof}
The update operations of the while loop in Algorithm~\ref{alg:submodularSmooth} make sure that $h(x_{i+2}) < (1-\beta) h(x_i)$ for any $0 < i < s$. By definition of $h$ and submodularity of $f$, $h$ cannot be larger than $k\Delta$, and is lower bounded by $\min_{v \in V} f(\{v\})$. Therefore $s$ is at most $O(\log_{1+\beta} (k\Spread)) = O(\log(k \Spread)/\beta)$. Algorithm~\ref{alg:submodularSmooth} keeps running $s$ instances of Algorithm~\ref{alg:StreamingAllThresholds}. Each of these instances maintains $|T| = \log_{1+\delta} (k\Delta/f(u_1))$ $= O(\log(k\Spread)/\delta)$ of sets of size at most $k$. Every new element is considered for addition to each of these $|T|$ sets in each of the $s$ instances. 
We also note that the update operations of while loop can be done in time $O(s^2)$ because there can be at most $s$ removals, and each takes $O(s)$ time to find. 
Therefore the update time per element is $O(Ls|T| + s^2)$ $= O(L\log^2(k\Spread)/(\epsilon^2))$ and total memory is  $O(k\log^2(k\Spread)/(\epsilon^2))$.
\end{proof}
\else
The proof of the theorem is straightforward, we omit it due to lack of space.
\fi

\section{A $(\nicefrac12 - \epsilon)$-Approximation Algorithm}
\label{sect:algo-onehalf}

In this section we present a $1/2 - \epsilon$ approximation algorithm that uses more memory and 
amortized update time to get a better approximation.

The algorithm is based on two main ideas. The first one is to split the entire stream in sub-windows of size 
$W' \leq W$ and to run a variant of the \StreamAlg starting from the first element of each sub-window. 
Each sub-window $i$ consists of elements that arrive at times $(i-1)W'+1, (i-1)W'+2, \dots, iW'$. This 
guarantees that when the first element of the sliding window is aligned with the start of a sub-window we 
can obtain a $1/2 - \epsilon$ approximation just by using the streaming algorithm started at the sub-window.

Unfortunately the situation is more complex when the first element of the sliding window lies inside a 
sub-window. In fact, there is no stream that would work natively. The second idea behind our algorithm is to 
run a variant of \StreamAlg algorithm first backward from the end of each sub-window to the beginning 
of each sub-window and then forward from the beginning of each sub-window and onwards. In particular, for every 
sub-window $i$, every threshold $\tau$ and every $(i-1)W'+1 \leq t' \leq iW'$, we build the sets $S_\tau^{i,t'}$ such 
that elements are added in $S_\tau^{i,t'}$ by analyzing sequentially elements in $iW', iW' -1, \dots, t', iW'+1,  
iW'+2, \dots$ and by adding an element if and only if the marginal contribution of the element to the value of 
the set is at least $\tau$ and if the set is smaller than $k$. Now there are two key observations to make. 
First, if the first element of the sliding windows arrives at time $t'$ we can use the sets $S_\tau^{i,t'}$ for 
different values of $\tau$ to solve the problem. Second, when we consider the family of sets
$\mathcal{S}_\tau^i=\cup_{t'} S_\tau^{i,t'}$, the family contains at most $k+1$ distinct sets (because going
backwards we add at most $k$ elements) so we can store only those sets and use them to solve the 
problem. In the remainder of this section we formalize this reasoning to get a
$1/2 - \epsilon$ approximation algorithm.

We start by introducing some additional notation. For every sub-window $i$, we define a set of thresholds 
$T_i =$ $\Big\{\frac{f(v_{iW'})}{2k}$, $\frac{(1+\delta)f(v_{iW'})}{2k}$, $\frac{(1 + \delta)^2f(v_{iW'})}{2k}$, $\ldots,
\frac{(1+\delta)^{m_i}f(v_{iW'})}{2k}\Big\}$ where $m_i$ is $\lfloor \log_{1+\delta} 2k\Delta/f(v_{iW'}) \rfloor$. For every $\tau 
\in T_i$, we first compute a single backward pass from the last element in sub-window $v_{iW'}$ and 
end by the first element of the sub-window $v_{(i-1)W'+1}$. In this pass, we add any item with 
marginal value at least $\tau$ to set $B_{\tau}^{i}$ as long as  $|B_{\tau}^{i}|$ remains at most $k$.

By definition $B_{\tau}^{i}$ contains at most $k$ elements; let $j_1 > j_2 > \dots > j_k$ be the indices of the elements $v_{j_1}, v_{j_2}, \dots, v_{j_k} \in B_{\tau}^{i}$. We define $S_\tau^{i,t'}= \cup_{j_{\ell} \geq t'}v_{j_{\ell}}$ 
as the set of elements in $B_{\tau}^{i}$ inserted at or after time $t'$. In our algorithm we do not keep all
$S_\tau^{i,t'}$, but we restrict our attention only to the set $S_\tau^{i,t'}$ for $t' \in \{j_0,j_1, j_2, \dots, j_k\}$ where $j_0 = iW'$.
We also define $\mathcal{S}_\tau^i = \cup_{t' \in \{j_0, j_1, j_2, \dots, j_k\}} S_\tau^{i,t'}$. 
We note that  $|B_{\tau}^{i}| < k$, so there will be at most $k+1$ sets in $\mathcal{S}_\tau^i$.
Finally, to handle the initial elements in the stream, we define  $T_0  =$ $\Big\{\frac{f(v_1)}{2k}$, $\frac{(1+\delta)f(v_1)}{2k}$, 
$\frac{(1 + \delta)^2f(v_1)}{2k}$, $\ldots,$ $\frac{(1+\delta)^{m_0}f(v_1)}{2k}\Big\}$ where $m_0$ is 
$\lfloor \log_{1+\delta} 2k\Delta/f(v_1) \rfloor$. We also initialize set $S_{\tau}^{0,0} = \emptyset$ for 
any $\tau \in T_0$.

Our algorithm has two steps. At first, if needed, it runs the backward algorithm to compute
$S_{\tau}^{i,t'}$. Then, it adds the last element in the stream, $v_t$, to all $S_{\tau}^{i,t'} \in \mathcal{S}_{\tau}^i$,
for every $i_t \leq i \leq \lceil t/W' \rceil -1$ (all active sub-windows) and $\tau \in T_i$, if its marginal impact is large enough. Here we let $i_t = \max\{0, \lceil (t-W+1)/W' \rceil\}$ be the first active sub-window. Finally, we 
set the solution of active window $A_t$ to be $\max_{\tau \in T_{i_t}} S_{\tau}^{i_t, t^*}$ where  
$S_{\tau}^{i_t, t^*}$ is the set in $\mathcal{S}_{\tau}^{i_t}$ with minimum $t^*$ such that $S_{\tau}^{i_t, t^*} \subseteq A_t$. 
We call the algorithm \BidirectionalAlg, and show the pseudo-code in Algorithm~\ref{alg:BidirectionalAlgorithm}. 

\begin{algorithm}[t!]
\label{alg:BidirectionalAlgorithm}
\caption{\BidirectionalAlg}
\KwInput{Stream of elements $v_1, v_2, \cdots$, sub-window size $W' \leq W$ and $\delta$}\;
$S_{\tau}^{0,0} \gets \emptyset$ for each $\tau \in T_0$\;
\For{$t = 1, 2, \cdots$}{
	// Initialize $S_{\tau}^{i, t'}$\;
	\If{$t = iW'$ for some integer $i$}{
		\For{$\tau \in T_i$}{
			$B_{\tau}^{i} \gets \emptyset$\;
			$S_{\tau}^{i, t} \gets \emptyset$\;
			\For{$t' = t, t-1, \cdots, t-W'+1$}{
				\If{$f'_{B_{\tau}^{i}}(v_{t'}) \geq \tau \wedge |B_{\tau}^{i}| < k$}{
					Add $v_{t'}$ to $B_{\tau}^{i}$\;
					$S_{\tau}^{i, t'} \leftarrow B_{\tau}^{i}$\;
				}
			}		
		}			
	}
	// Update all active $S_{\tau}^{i, t'}$\;
	$i_t \gets \max\{0, \lceil (t-W+1)/W' \rceil\}$\;
	\For{$i_t \leq i \leq \lceil t/W' \rceil -1$}{
		\For{$\tau \in T_i$}{		
			\For{$S_{\tau}^{i,t'} \in \mathcal{S}_{\tau}^i$}{
				\If{$f'_{S_{\tau}^{i, t'}}(v_t) \geq \tau \wedge |S_{\tau}^{i, t'}| < k$}{
					Add $v_t$ to $S_{\tau}^{i, t'}$
				}
			}
		}
	}
	Let $S_{\tau}^{i_t,t^*}$ be the set in $\mathcal{S}_{\tau}^{i_t}$ with minimum $t^*$ such that $S_{\tau}^{i_t, t^*} \subseteq A_t$\;
	Return $\max_{\tau \in T_{i_t}} f(S_{\tau}^{i_t, t^*})$\;
}
\end{algorithm}


\begin{theorem}\label{thm:OneHalf}
For any $\epsilon > 0$, Algorithm~\ref{alg:BidirectionalAlgorithm} with 
$\delta = \epsilon$ is a $(\nicefrac12-\epsilon)$-approximation for 
submodular maximization with cardinality constraint in sliding window model. 
\end{theorem}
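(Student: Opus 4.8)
The plan is to reduce the whole analysis to the single--window guarantee already established in Lemma~\ref{lem:ThresholdGuarantee}. Fix a time $t$; we must lower bound the value $\max_{\tau \in T_{i_t}} f(S_\tau^{i_t,t^*})$ returned by Algorithm~\ref{alg:BidirectionalAlgorithm} against $f_k(A_t)$. Let $s=\max(1,t-W+1)$ be the first index of the active window, let $i=i_t$ be the sub--window that contains $s$, and let $OPT$ be an optimal (nonempty, size $\le k$) subset of $A_t$, so $f(OPT)=f_k(A_t)$. A short bookkeeping step first records the placement facts I will keep reusing: when $i\ge 1$ we have $s\le iW'\le t$ (so $v_{iW'}\in A_t$ and $f(\{v_{iW'}\})\le f(OPT)$), and every arrival $v_{t''}$ with $iW'<t''\le t$ is passed by the forward loop to the sets of sub--window $i$, because $i_{t''}\le i\le \lceil t''/W'\rceil-1$ for all such $t''$; the boundary element $v_{iW'}$ is, if $iW'=t$, instead handled inside the backward pass. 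The regime $t<W$ is exactly the degenerate case $i=0$, in which the ``backward'' portion below is empty and the relevant thresholds $T_0$ are built from $f(v_1)$.

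The core of the proof is a structural claim: for every $\tau\in T_i$, the set $S_\tau^{i,t^*}$ that the algorithm uses at time $t$ is identical to the solution $S_\tau$ that \StreamAlg (with the same parameter $\delta$) maintains for threshold $\tau$ after it processes the stream
$$\sigma=\bigl(v_{iW'},\,v_{iW'-1},\,\dots,\,v_s,\ \ v_{iW'+1},\,v_{iW'+2},\,\dots,\,v_t\bigr),$$
which is a permutation of the whole active window $A_t$. I will prove this in three moves. First I identify $t^*$: since the elements of $B_\tau^i$ are inserted by the backward pass in strictly decreasing index order $j_1>j_2>\dots$, the set $S_\tau^{i,j_\ell}$ consists of the first $\ell$ of them together with whatever the forward loop appended afterwards, and $S_\tau^{i,j_\ell}\subseteq A_t$ holds exactly when $j_\ell\ge s$ for $\ell\ge1$, while $S_\tau^{i,j_0}\subseteq A_t$ always; hence the minimizing $t^*$ is $j_{\ell^*}$ with $\ell^*=\max\{\ell\ge0:\ \ell=0\ \text{ or }\ j_\ell\ge s\}$. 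Second, because the backward pass visits every index $\ge s$ strictly before any index $<s$, the elements it selects among $v_{iW'},\dots,v_s$ — which are precisely $\{v_{j_1},\dots,v_{j_{\ell^*}}\}$, the backward part of $S_\tau^{i,t^*}$ — coincide with the selection \StreamAlg makes for threshold $\tau$ on the prefix $v_{iW'},\dots,v_s$ of $\sigma$; here one checks that the cardinality cap $k$ fires in one run if and only if it fires in the other, so nothing is gained or lost. Third, the forward loop subsequently offers $v_{iW'+1},\dots,v_t$ to $S_\tau^{i,t^*}\in\mathcal{S}_\tau^i$ under exactly the acceptance rule of \StreamAlg, i.e.\ it is \StreamAlg continuing on the suffix of $\sigma$. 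Since the first element of $\sigma$ (namely $v_{iW'}$, or $v_1$ when $i=0$) determines the threshold set $T_i$ that \StreamAlg would use, the structural claim gives $\max_{\tau\in T_i}f(S_\tau^{i,t^*})=\max_\tau f(S_\tau)=h(\sigma)$.

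The theorem now follows immediately. As $\sigma$ is a permutation of $A_t$ we have $f_k(\sigma)=f_k(A_t)$ and $OPT\subseteq\sigma$ with $1\le|OPT|\le k$, so Lemma~\ref{lem:ThresholdGuarantee} applied with $A=\sigma$ and $B=OPT$ yields
$$h(\sigma)\ \ge\ (1-\delta)\,\frac{k}{k+|OPT|}\,f(OPT)\ \ge\ \frac{1-\delta}{2}\,f_k(A_t)\ =\ \Bigl(\tfrac12-\tfrac{\epsilon}{2}\Bigr)f_k(A_t)\ \ge\ \Bigl(\tfrac12-\epsilon\Bigr)f_k(A_t),$$
using $\delta=\epsilon$. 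Since the value Algorithm~\ref{alg:BidirectionalAlgorithm} returns at time $t$ equals $\max_{\tau\in T_{i_t}}f(S_\tau^{i_t,t^*})=h(\sigma)$, this completes the argument.

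I expect the structural claim to be the only real difficulty, and within it the second move is the delicate point: one must argue that restricting the full backward pass over sub--window $i$ to the sub--interval $[s,iW']$ produces the same selected set as an independent \StreamAlg run started on that sub--interval, which relies jointly on the ``indices $\ge s$ are visited first'' observation and on a careful treatment of the cardinality cap. The remaining care is purely combinatorial: verifying that $t^*$ really picks the smallest retained offset whose set sits inside $A_t$, and that every arrival in $(iW',t]$ (including the boundary $t''=iW'$ and the $t<W$, $i=0$ degenerate cases) is actually delivered to $S_\tau^{i,t^*}$. Once this is settled, the approximation bound is exactly Lemma~\ref{lem:ThresholdGuarantee} instantiated at $B=OPT$.
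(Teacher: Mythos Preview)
Your proof is correct and takes a genuinely different route from the paper's. The paper does \emph{not} establish any structural identification with \StreamAlg; instead it reruns the two--case analysis of Lemma~\ref{lem:ThresholdGuarantee} directly on the set $S_\tau^{i_t,t^*}$: pick $\tau\in T_{i_t}$ with $(1-\delta)f_k(A_t)/(2k)\le\tau\le f_k(A_t)/(2k)$; if $|S_\tau^{i_t,t^*}|=k$ then $f(S_\tau^{i_t,t^*})\ge k\tau\ge(1-\delta)f_k(A_t)/2$; otherwise it argues that every $x\in A_t$ has marginal value below $\tau$ to $S_\tau^{i_t,t^*}$ (using the minimality of $t^*$ for the elements in sub--window $i_t$, and the forward acceptance rule for the later elements), and finishes with the standard inequality $f(OPT)-f(S_\tau^{i_t,t^*})\le\sum_{x\in OPT}f'_{S_\tau^{i_t,t^*}}(x)<k\tau\le f(OPT)/2$.

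Your approach instead proves once that $S_\tau^{i_t,t^*}$ is \emph{literally} the state \StreamAlg reaches after processing a permutation $\sigma$ of $A_t$, and then invokes Lemma~\ref{lem:ThresholdGuarantee} as a black box on $\sigma$. This is more modular and makes the design principle of \BidirectionalAlg explicit (it simulates \StreamAlg on a reordered window whose first element is $v_{iW'}$, which is exactly why $T_i$ is built from $f(v_{iW'})$), at the cost of the bookkeeping you flag: matching the backward prefix to an independent \StreamAlg run on $[s,iW']$, handling the cardinality cap, and the $i=0$ boundary. The paper's inline argument is shorter but essentially reproves the lemma inside the theorem.
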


\begin{proof}
The main idea is similar to the proof of Lemma~\ref{lem:ThresholdGuarantee}. 
There exists some $(1-\delta) f_k(A_t)/2k \leq \tau \leq f_k(A_t)/2k$ in set $T_{i_t}$. 
We lower bound value of $S_{\tau}^{i_t, t^*}$. If $|S_{\tau}^{i_t,t^*}| = k$, we have $f(S_{\tau}^{i_t,t^*}) \geq k\tau \geq (1-\delta)f_k(A_t)/2$ which proves the claim. 

In the other case, we show $f'_{S_{\tau}^{i_t,t^*}}(x) < \tau$ for any $x \in A_t$. 
The choice of $t^*$ implies that $f'_{S_{\tau}^{i_t,t^*}}(x) < \tau$ for any $x \in A_t$ that arrives in sub-window $i_t$ otherwise we could find a smaller $t^*$ which is a contradiction. Furthermore any $x$ that comes after sub-window $i$ with incremental value $\geq \tau$  is also added to $S_{\tau}^{i_t,t^*}$. Therefore the incremental value of any $x \in A_t$ is less than $\tau$.
Let $OPT$ be the $\argmax_{S \subset A_t: |S| \leq k} f(S)$.  
Submodularity guarantees that $\quad f(OPT) - f(S_{\tau}^{i_t,t^*}) \leq$ \\$ \sum_{x \in OPT} f'_{S_{\tau}^{i_t,t^*}}(x) < |OPT| \tau \leq k\tau \leq f(OPT)/2$ which completes the proof.
\end{proof}

We now state a bound on the memory and the update time of Algorithm ~\ref{alg:BidirectionalAlgorithm}. 

\begin{theorem}
Algorithm ~\ref{alg:BidirectionalAlgorithm} with 
$\delta = \epsilon$ has an average update time of $O(Lk\log(k\Spread)W/(W'\epsilon))$ per element and uses memory $O(W' + (k^2\log(k\Spread)W/(W'\epsilon)))$ where $L$ is an upper bound on each evaluation of function $f$. 
\end{theorem}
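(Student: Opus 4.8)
The plan is to bound the memory, and then—for the average update time—to account separately for the periodic backward sweep and for the per-arrival forward updates, adding the two; the only genuinely non-routine point will be recognizing that the claimed bound is an \emph{average} because of the sweep.

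\textbf{Memory.} The first observation is that only $O(W/W')$ sub-windows are simultaneously \emph{active}: since $A_t$ spans $W$ elements it meets at most $\lceil W/W'\rceil+1$ of the length-$W'$ sub-windows, and everything stored for older sub-windows can be released once they fall entirely out of $A_t$. For each active sub-window $i$ the algorithm keeps, for every $\tau\in T_i$, the family $\mathcal S_\tau^i$; as noted in the construction $|B_\tau^i|\le k$ forces $|\mathcal S_\tau^i|\le k+1$, each of its members has size at most $k$, and $|T_i|=\lfloor\log_{1+\delta}2k\Delta/f(v_{iW'})\rfloor=O(\log(k\Spread)/\delta)=O(\log(k\Spread)/\epsilon)$ (using $\delta=\epsilon$ and the definition of $\Spread$, exactly as for Algorithm~\ref{alg:submodularSmooth}). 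Multiplying these counts gives $O(k^2\log(k\Spread)W/(W'\epsilon))$ words for the stored sets. To this one adds an $O(W')$ buffer holding the raw elements of the sub-window currently being filled, which is needed to run the backward pass once $t$ reaches the next boundary $iW'$. The sum is the claimed memory bound.

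\textbf{Update time.} The backward sweep fires only at $t=iW'$, i.e. once every $W'$ arrivals; when it does, for each of the $|T_i|=O(\log(k\Spread)/\epsilon)$ thresholds it scans the $W'$ elements of the sub-window in reverse with one marginal-gain test per element (cost $O(L)$ each, caching $f(B_\tau^i)$), so a sweep costs $O(LW'\log(k\Spread)/\epsilon)$, which amortizes to $O(L\log(k\Spread)/\epsilon)$ per element. The forward updates run on every $v_t$: we loop over the $O(W/W')$ active sub-windows, the $O(\log(k\Spread)/\epsilon)$ thresholds of each, and the $\le k+1$ sets of each $\mathcal S_\tau^i$, with one $O(L)$ test per set, for $O(Lk\log(k\Spread)W/(W'\epsilon))$ in total. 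Emitting the output—for each $\tau\in T_{i_t}$, finding the smallest $t^*\in\{j_0,\dots,j_k\}$ with $t^*\ge t-W+1$ (equivalently $S_\tau^{i_t,t^*}\subseteq A_t$, since the forward elements already lie in $A_t$), then taking the best $f$-value—costs $O((k+L)\log(k\Spread)/\epsilon)$, which is dominated by the forward cost. Adding the amortized backward cost to the forward cost yields the stated average update time.

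The step I expect to require a word of care, rather than any real difficulty, is the amortization: the backward sweep is a genuine $\Theta(LW'\log(k\Spread)/\epsilon)$ burst recurring every $W'$ steps, so the guarantee is necessarily on the average and not the worst-case update time, and it is precisely this burst that the amortization absorbs. Everything else reduces to counting once the three structural facts are in place—at most $O(W/W')$ active sub-windows, $O(\log(k\Spread)/\epsilon)$ thresholds per sub-window, and at most $k+1$ sets of size $\le k$ in each $\mathcal S_\tau^i$.
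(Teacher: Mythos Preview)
Your proof is correct and follows essentially the same approach as the paper: both count $O(W/W')$ active sub-windows, $O(\log(k\Spread)/\epsilon)$ thresholds each, and at most $k+1$ sets of size $\le k$ per threshold (plus the $O(W')$ buffer), and both amortize the $O(LW'|T_i|)$ backward sweep over $W'$ arrivals while charging the forward updates directly. Your treatment is in fact slightly more thorough, as you also account for the output-emission cost and make the amortization explicit.
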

\ifjournal
\begin{proof}
We start by bounding $|T_i|$ for any sub-window $i$. Note that $m_i$ is at most $\log_{1+\delta}(2k\Delta/f(v_t)) = O(\log(k\Spread)/\epsilon)$. We note that Algorithm~\ref{alg:BidirectionalAlgorithm} keeps maintaining  set $S_{\tau}^{i,t'}$  only if $i$ is in range $[\lceil (t-W+1)/W' \rceil, \lceil t/W' \rceil -1]$. Therefore there are at most $W/W'$ active values of $i$, and the sets associated with smaller (older) values of $i$  can be discarded. Each of these sets has size at most $k$, and there are also $k+1$ values of $t'$. We conclude that the total memory needed to keep these sets is $O(k^2\log(k\Spread)W/(W'\epsilon))$.
In order to initialize the set $S_{\tau}^{i,t'}$, we need to have access to all elements of sub-window $i$ when we reach time $t = iW'$. This means we need to also keep the last $W'$ elements. So the total memory is $O(W' + k^2\log(k\Spread)W/(W'\epsilon))$. 
The computation time per element to keep the sets updated is upper bounded by the number of relevant sets $O(k\log(k\Spread)W/(W'\epsilon))$ times $L$. The computation time to initialize the sets is $W'|T_i|L$ which is done once for $W'$ elements. So the average computation time per element in this part is $L|T_i| = O(L\log(k\Spread)/\epsilon)$. This gives  a total average computation time per element of $O(Lk\log(k\Spread)W/(W'\epsilon))$.
\end{proof}
\else 
The proof of the theorem is straightforward, we omit it due to lack of space.
\fi

Note that the two last theorems imply, for example, that it is possible to obtain a $\nicefrac12 -\epsilon$ approximation using only $O(Lk\log(k\Spread)\sqrt{W}/(\epsilon))$ update time and $O(k^2\log(k\Spread)\sqrt{W}/\epsilon)$ memory.

\section{Experiments}
\label{sect:exp}

We present the experimental evaluation of our methods on several publicly available real-world datasets. We first show how to avoid some of the assumptions we made during the analysis, for example knowing the maximum marginal gain, $\Delta$. Then we describe the datasets and baselines, and finally present the empirical results. Overall, we show that our algorithms are significantly faster that the offline greedy algorithm that recomputes the results at every time step, while achieving comparable accuracy. 

\subsection{Implementation}
\label{sect:implementation}
The algorithms were implemented in \texttt{C++} and run on commodity hardware. Each run employed a single core. 

One latent assumption we made in the analysis of the algorithm is the knowledge of $\Delta$. Although the value of $\Delta$ is sometimes known, we show how to implement the algorithms without this apriori knowledge using lazy initialization. A similar approach has been used in ~\cite{KDD14}. 

We discuss the details for \StreamAlg, but note that the same method works for all the other algorithms.  The parameter $\Delta$ is only used to define the number of thresholds $T$ we use. Specifically, we set $m = \lfloor \log_{1+\delta} 2k\Delta/f(u_1) \rfloor$, and define thresholds from $\frac{f(u_1)}{2k}$ to $\frac{(1+\delta)^mf(u_1)}{2k}$. 

We can achieve the same provable guarantees while actually initializing the thresholds lazily. Let $\Delta_t = \max_{i\le t} f(\{v_i\}) $ be the maximum of the value of $f$ on any single element seen up to time $t$. Let $m_t =   \lfloor \log_{1+\delta} 2k\Delta_t/f(u_1) \rfloor$, the algorithm will maintain all thresholds in $T_t  =$ $\{\frac{f(u_1)}{2k}$, $\frac{(1+\delta)f(u_1)}{2k}$, $\frac{(1 + \delta)^2f(u_1)}{2k}$, $\ldots, \frac{(1+\delta)^{m_t}f(u_1)}{2k}\},$ and the associate solution $S_{\tau}$ for $\tau \in T_t$. Note that $m_t$ can only increase. In these cases, i.e. when $m_t > m_{t-1}$, we first add the new thresholds and initialize them to $\emptyset$. Submodularity of $f$ guarantees that no prior elements would meet the new thresholds. 


The same technique can be used for \SmoothAlg, where again each individual copy of the algorithm \StreamAlg lazily initializes the thresholds depending on the running maximum. Similarly, in \BidirectionalAlg we maintain the $T_i$ and the associated sets $S^{i,t}_\tau$ by initializing them only when necessary. Finally, observe that for specific cases of $f$, the algorithms can be further sped up by discarding small thresholds. For example, in unweighted coverage, the minimum non-zero $f'(\cdot)$ is at least $1$, therefore all smaller thresholds can be ignored.  

\ifjournal 
\begin{figure*}[ht!]
\begin{center}
\subfigure[DBLP]{\includegraphics[width=0.4\textwidth,keepaspectratio]{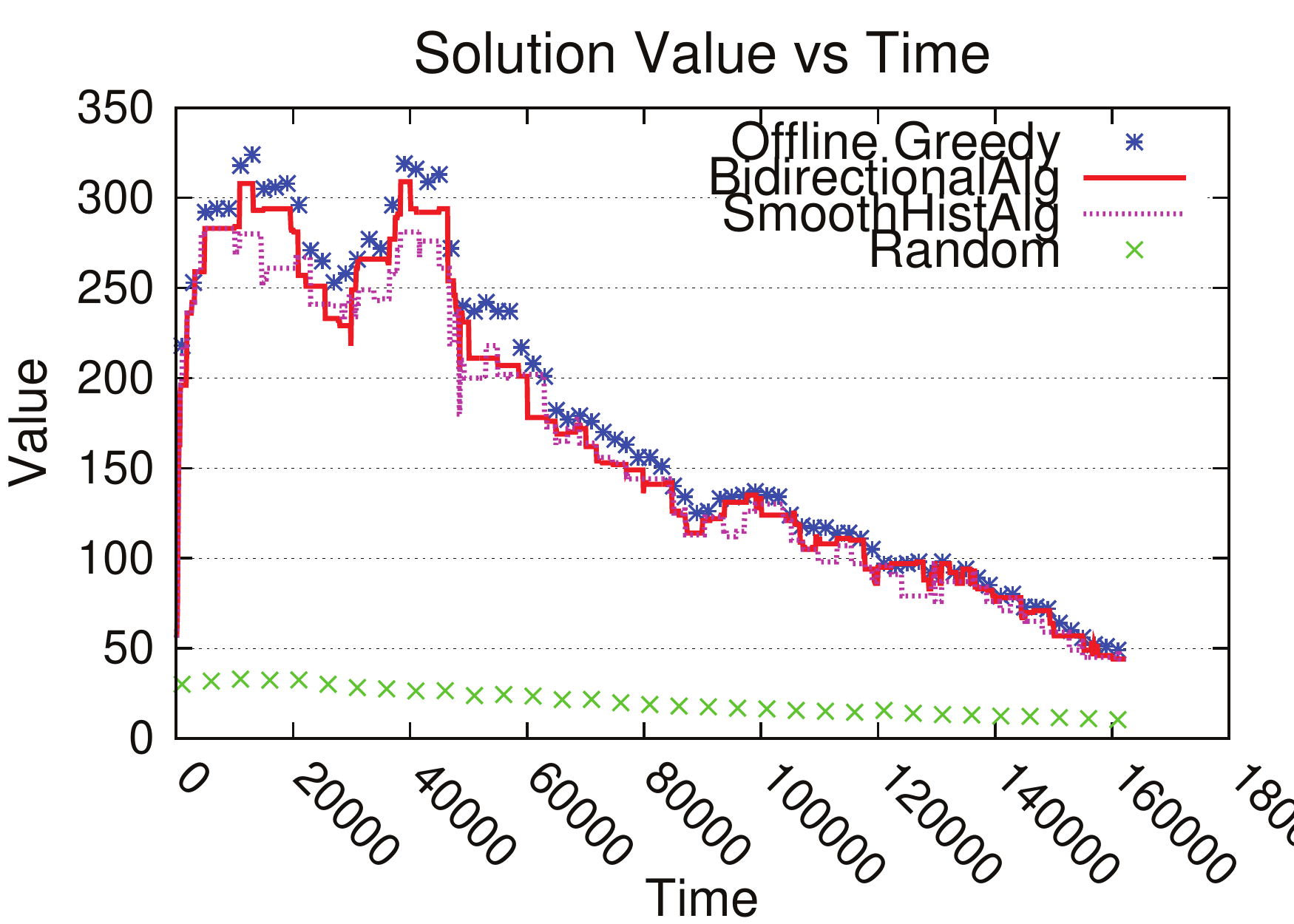}}
\subfigure[Gowalla{\bf}]{\includegraphics[width=0.4\textwidth,keepaspectratio]{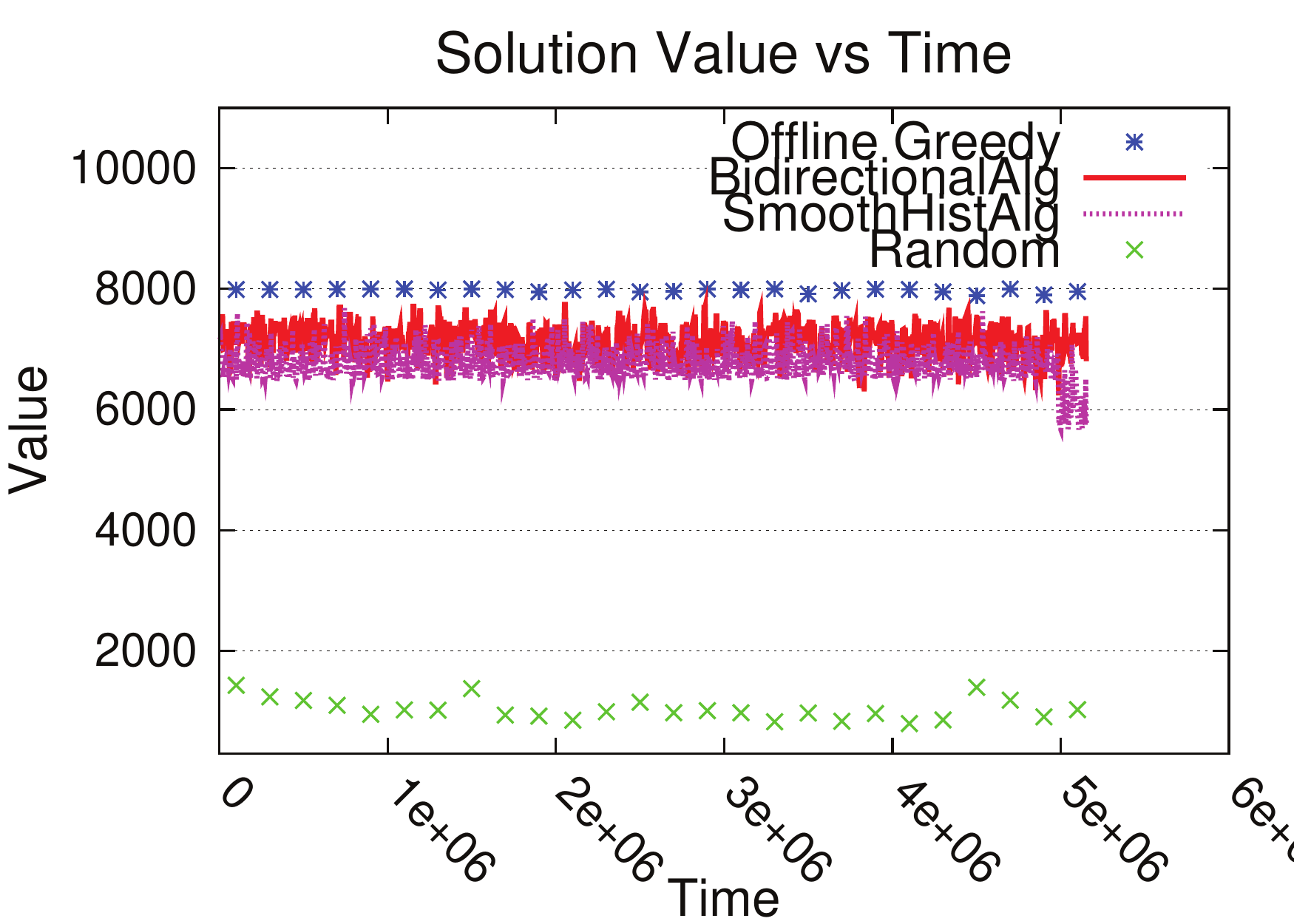}}
\subfigure[Yahoo!{\bf}]{\includegraphics[width=0.4\textwidth,keepaspectratio]{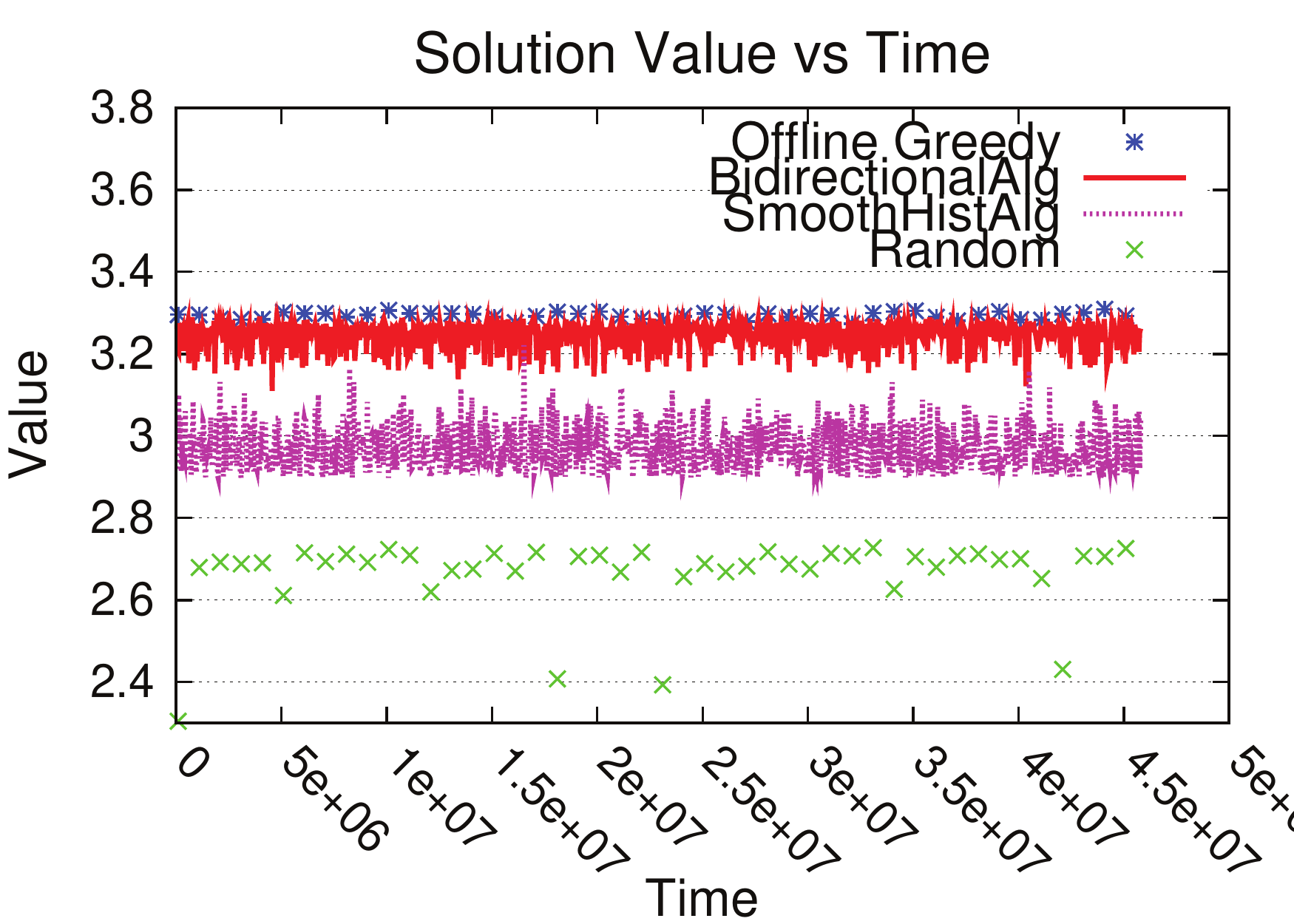}}
\caption{Value of the solution obtained by our algorithms for $k=10$, $W=10{,}000$ and $\epsilon=0.1$ as well as by the off-line greedy algorithm and a random baseline on a sample of time-steps.  Notice how our algorithms achieve solutions with value close to that of off-line greedy.}
\label{fig:evolution}
\end{center}
\end{figure*}
\else 
\begin{figure*}[ht!]
\begin{center}
\subfigure[DBLP]{\includegraphics[width=0.28\textwidth,keepaspectratio]{dblp-10}}
\subfigure[Gowalla{\bf}]{\includegraphics[width=0.28\textwidth,keepaspectratio]{gowalla-10}}
\subfigure[Yahoo!{\bf}]{\includegraphics[width=0.28\textwidth,keepaspectratio]{yahoo-10}}
\caption{Value of the solution obtained by our algorithms for $k=10$, $W=10{,}000$ and $\epsilon=0.1$ as well as by the off-line greedy algorithm and a random baseline on a sample of time-steps.  Notice how our algorithms achieve solutions with value close to that of off-line greedy.}
\label{fig:evolution}
\end{center}
\vspace{-0.2in}
\end{figure*}

\fi 

\subsection{Datasets}

In this section we describe the datasets used for the evaluation. 

{\bf DBLP. } We define a max cover instance from the  DBLP publication records~\cite{DBLP}. We extract $1.8$ million publications and $1.2$ million authors for the period from $1959$ to $2016$. Our goal is to maintain a set of $k$ authors  that together represent the largest possible number of different conferences in computer science. We say that a conference is represented by an author if she has published at least 3 papers in the venue, this gives about $160$ thousand items in the stream (one per author with at least $3$ papers in the same conference).  We order the authors by the time of their first publication (with ties broken randomly). 

{\bf Gowalla.} In this experiment we want to simulate a system that maintains a set of currently {\it hot} locations that cover as many users as possible in their immediate proximity. We use the check in data collected by the Gowalla social network~\cite{cho2011friendship}, which contains about $6.5$ millions timestamped and geo-localized check-ins of about $200$ thousand users over the period of 2/2009 - 10/2010. We first partition the dataset into two parts temporally. We use the first 20\% to define the submodular function $f$, as we describe below, and the last 80\% to evaluate our algorithms. 

During the first part, we divide the globe latitude and longitude coordinates into a uniform degree-spaced grid of size $80{,}000 \times 80{,}000$ cells (these correspond to $1$km size cells at the Equator). For each cell $(i,j)$ in the grid we record the set of users that had at least one check-in in that cell. For a given location $(i, j)$, the associated set is the set of users that checked in to a place in location $(i,j)$ or an immediately adjacent cell during the first phase. The goal is to maintain $k$ check-in locations from the active window that covers as many distinct users as possible.\footnote{We recognize that our modeling of this problem is simplistic, but wanted to keep it to a minimum, as the main purpose of this dataset is to evaluate the performance of the algorithms, and not to  study location-based systems.} 

{\bf Yahoo! Front Page Visits.} Here we experiment with a standard dataset used in submodular optimization literature~\cite{KDD14}. This dataset is extracted from the click logs of news  articles displayed in the Yahoo! Front Page~\cite{yahoo-front}. It contains $46$ millions timestamped $5$-dimensional feature vectors (we discard the constant feature and normalize the vector norm), representing user-visits over the period of ten days in May 2009. We stream the vectors in time order and optimize the active set selection function defined in section~\ref{sect:applications}. 

\subsection{Results}
 \label{sect:results} 
To evaluate the performance of our algorithms, we consider two benchmarks. The first, serving as a sanity check, is a random sample of $k$ points from the sliding window. The second is the batch greedy algorithm on the elements in the active set. The latter serves as an upper bound, as it is the best algorithm for the problem. However, since it is not optimized for streaming computations, it is expensive to evaluate. As such, we run it regularly, but not at every time step. We emphasize that ours are the first algorithms that handle streams with both additions and deletions. 

{\bf Value of the output over time.}
In our first experiment we show the value of the objective function at every time step as computed by the algorithm and the two benchmarks. For the random baseline, we average the results over $1000$ trials, all of the other algorithms are deterministic. We set $W = 10{,}000$, $k = 10$, and $\epsilon = 0.1$.  The results are shown in Figure~\ref{fig:evolution}. Notice that in all the experiments involving \BidirectionalAlg  we set $W'=W$ to model the scenario of a user that wants the best running time for a $1/2-\epsilon$ approximation. 

In all instances the \BidirectionalAlg algorithm results are very close to the off-line greedy algorithm. As expected, the solution of \SmoothAlg is slightly worse (we observe a gap of about 10\%). So both algorithms perform much better than the pessimistic worst-case analysis, a result that is quantitatively confirmed in the next section. Not surprisingly, all algorithms greatly exceed the random baseline. 

Finally note that for the DBLP dataset, the solution value generally decreases, as authors who first publish later tend to have shorter careers, and thus have not had a chance to cover as many venues. On the other hand, due to the nature of the objective, the value of the solution in Gowalla and Yahoo! datasets remains relatively stable, and oscillates in a smaller region. 

\ifjournal 
\begin{figure}[ht!]
\begin{center}
\subfigure[\BidirectionalAlg]{\includegraphics[width=0.4\textwidth,keepaspectratio]{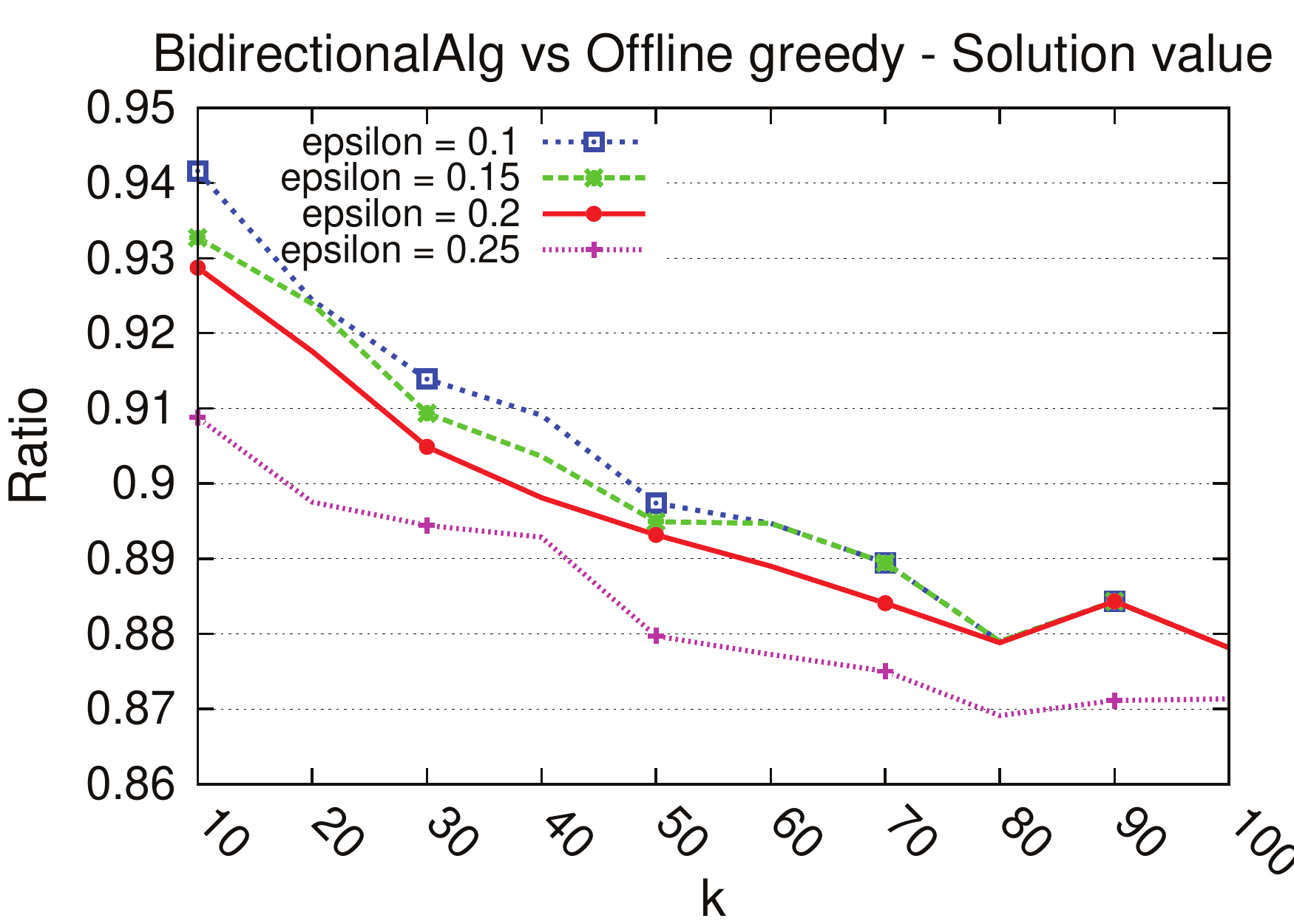}}
\subfigure[\SmoothAlg]{\includegraphics[width=0.4\textwidth,keepaspectratio]{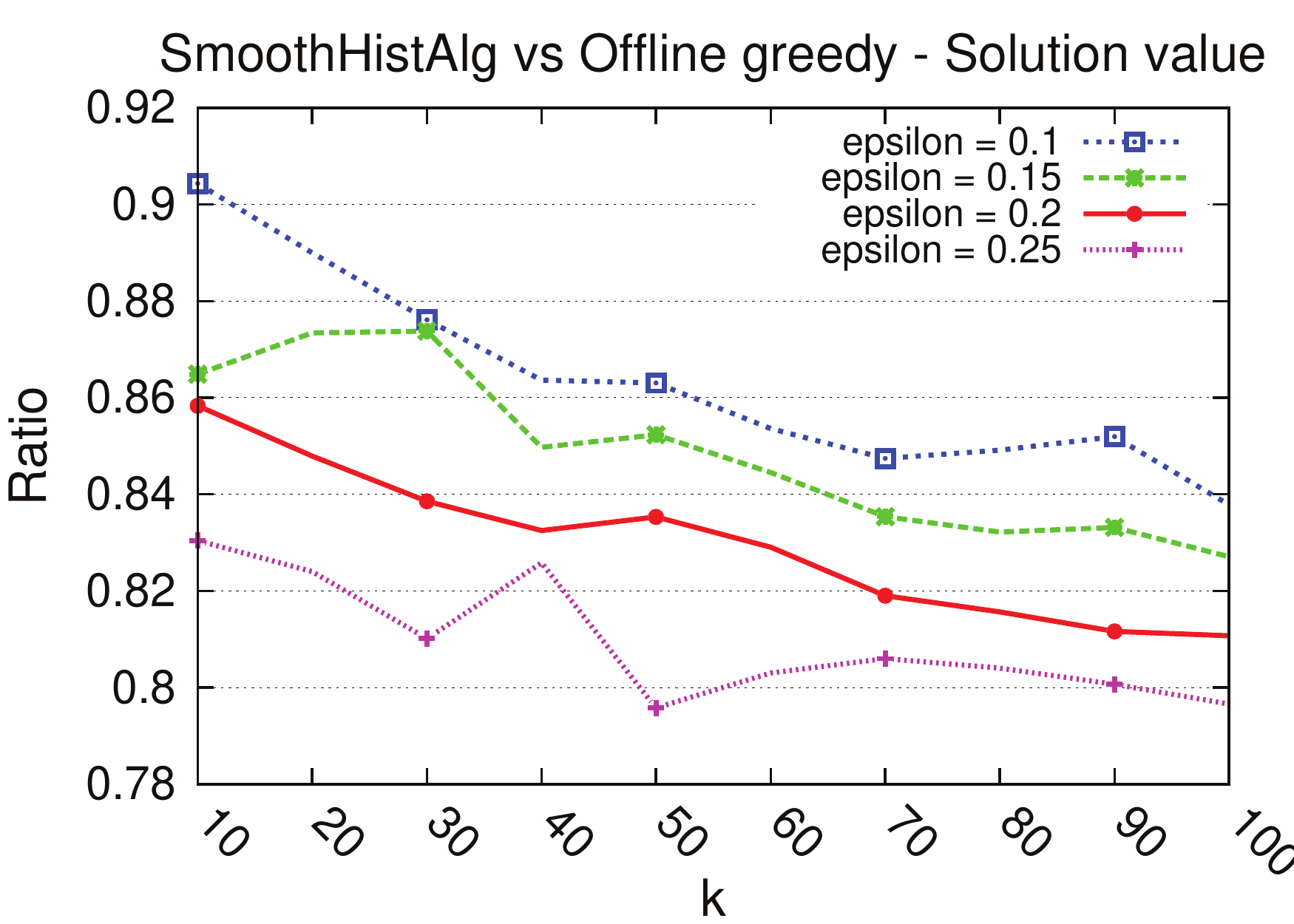}}
\caption{Average ratio of solution value of our algorithms over offline greedy in DBLP---higher is better}
\label{fig:value-cmp}
\end{center}
\end{figure}
\else 
\begin{figure}[ht!]
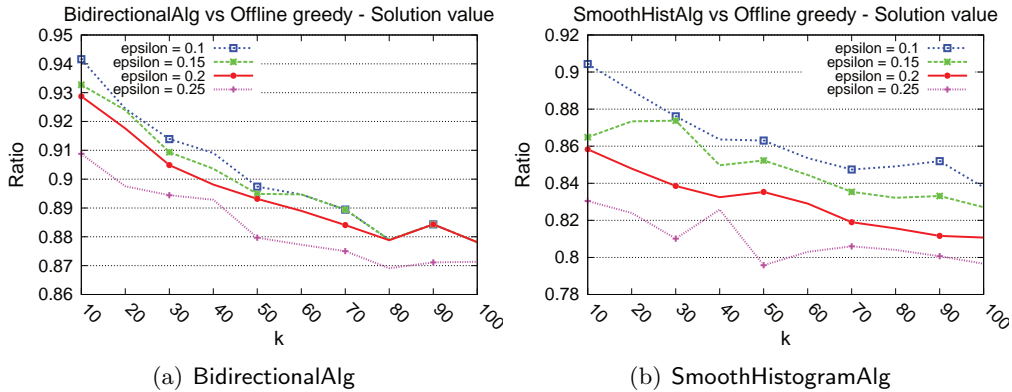

\subfigure[\BidirectionalAlg]{\includegraphics[width=0.22\textwidth,keepaspectratio]{dblp-ratio-value-onehalf}}
\subfigure[\SmoothAlg]{\includegraphics[width=0.22\textwidth,keepaspectratio]{dblp-ratio-value-smooth}}
\caption{Average ratio of solution value of our algorithms over offline greedy in DBLP---higher is better}
\label{fig:value-cmp}
\vspace{-0.1in}
\end{figure}
\fi

{\bf Comparison with greedy.} To better understand the relative performance of the algorithms, we focus on the DBLP dataset, and consider what fraction of the benchmark greedy solution is achieved by all algorithms for different values of $k$; we plot the results in Figure~\ref{fig:value-cmp}. Our algorithms always report solutions that are between $80\%$ and $95\%$ of the value of offline greedy for any setting of $\epsilon \in [0.1, 0.25]$ and any $k \in [10,100]$, far exceeding the theoretical worst case analysis. All of the results match the intuition provided by the theory: for the same $\epsilon$ parameter the \BidirectionalAlg returns higher values than \SmoothAlg, and lower $\epsilon$ parameters yield better solutions. Also while the problem becomes more challenging with higher $k$ values (more overlap in the sets need to be handled) the streaming algorithms achieve good results (similar results are observed in all datasets and using worst-case ratios instead of average ratios).

Finally, we evaluate the speed of our algorithms, again as compared to the offline greedy approach. Following previous work~\cite{KDD14}, we record the average number of evaluations of the submodular function executed for each item processed. This captures the most expensive operations, and ignores implementation variations. We show the results in Figure~\ref{fig:calls-cmp}.

Notice that our algorithms are much faster than re-running greedy from scratch. Even in small datasets with small window size,  our algorithms require between a factor of $2{,}000$ and $6{,}000$ fewer calls per item processed. Even larger speed-ups can be observed for larger datasets and window sizes. As expected,  the speedups increase with $\epsilon$.  We observe that as $k$ increases, the speedups achieved by the \SmoothAlg algorithm grow as well, while those of \BidirectionalAlg are slightly decreasing with $k$. This is expected as \SmoothAlg update time depends only poly-logarithmically on $k$ while \BidirectionalAlg has a linear dependence. These considerations are confirmed by the results in Figure~\ref{fig:calls-number} where we report the average number of evaluation of the submodular function in the setting of the previous experiment. Notice how only a few hundred evaluations are sufficient, and that the evaluations executed by \SmoothAlg are less sensitive to $k$ as expected.

\ifjournal
\begin{figure}[t!]
\begin{center}
\subfigure[\BidirectionalAlg]{\includegraphics[width=0.4\textwidth,keepaspectratio]{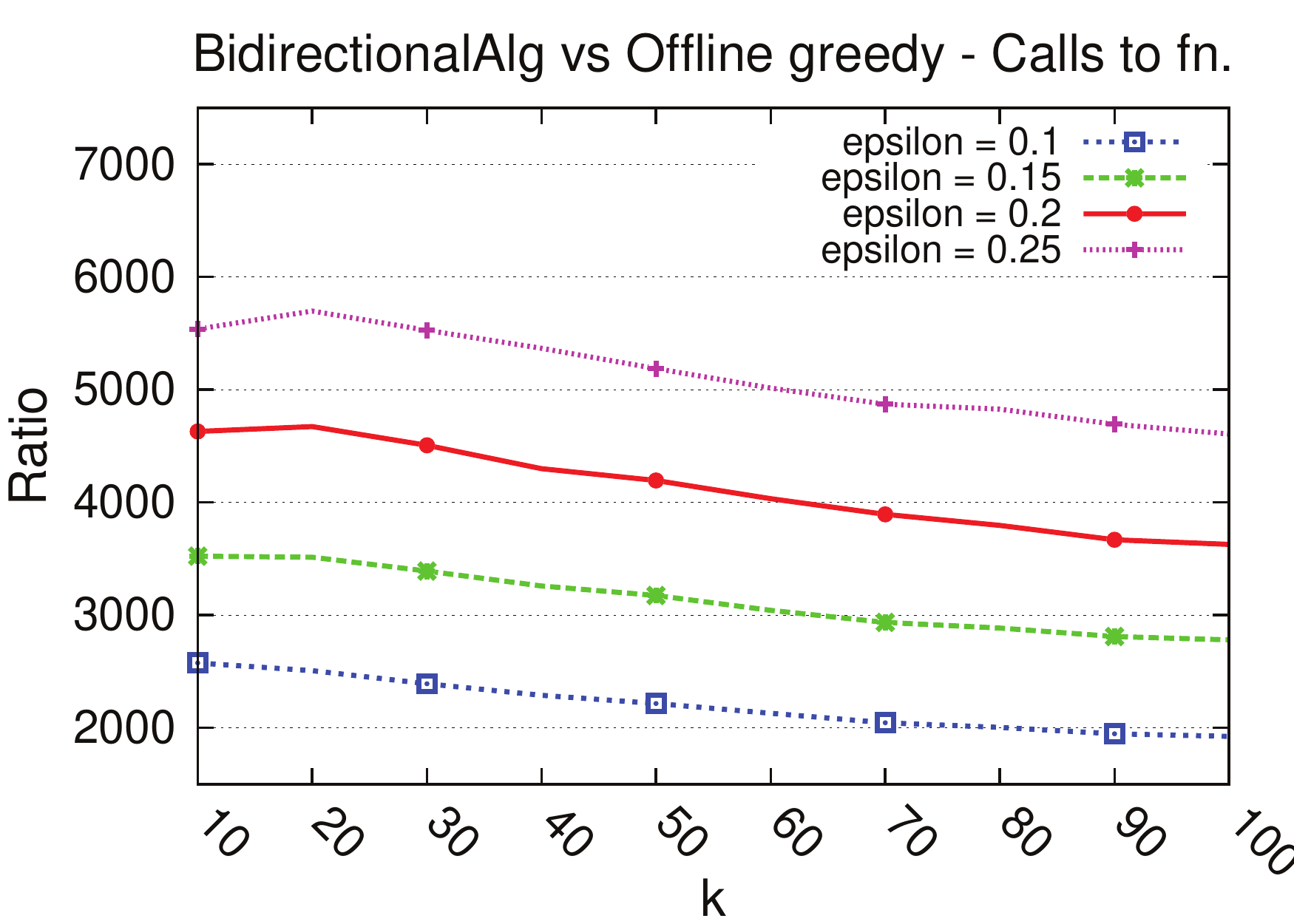}}
\subfigure[\SmoothAlg]{\includegraphics[width=0.4\textwidth,keepaspectratio]{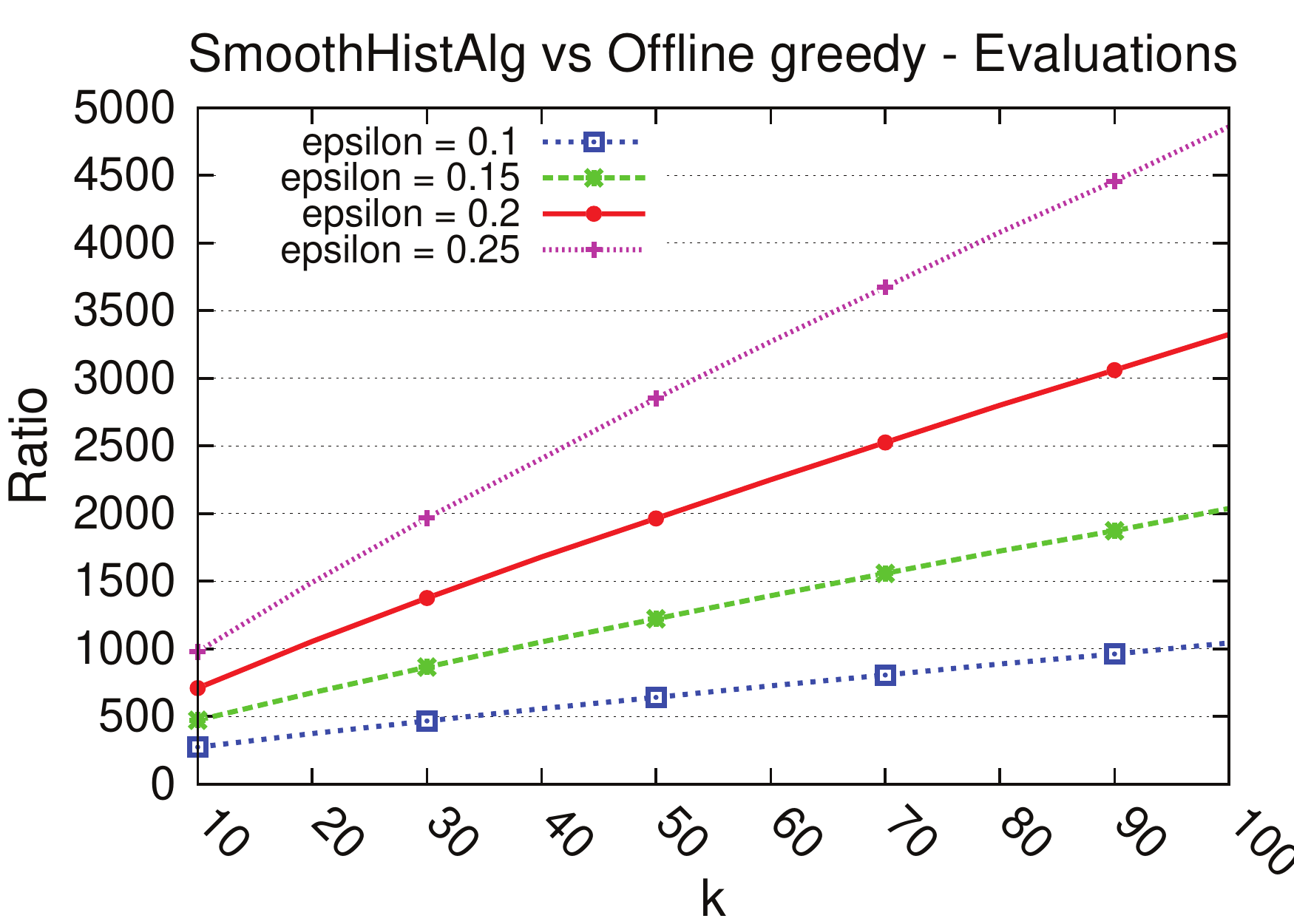}}
\caption{{Average ratio of the number of submodular function evaluations executed by greedy over the  ones executed by our algorithms---higher is better.}}
\label{fig:calls-cmp}
\end{center}
\end{figure}
\begin{figure}[t!]
\begin{center}
\subfigure[\BidirectionalAlg]{\includegraphics[width=0.4\textwidth,keepaspectratio]{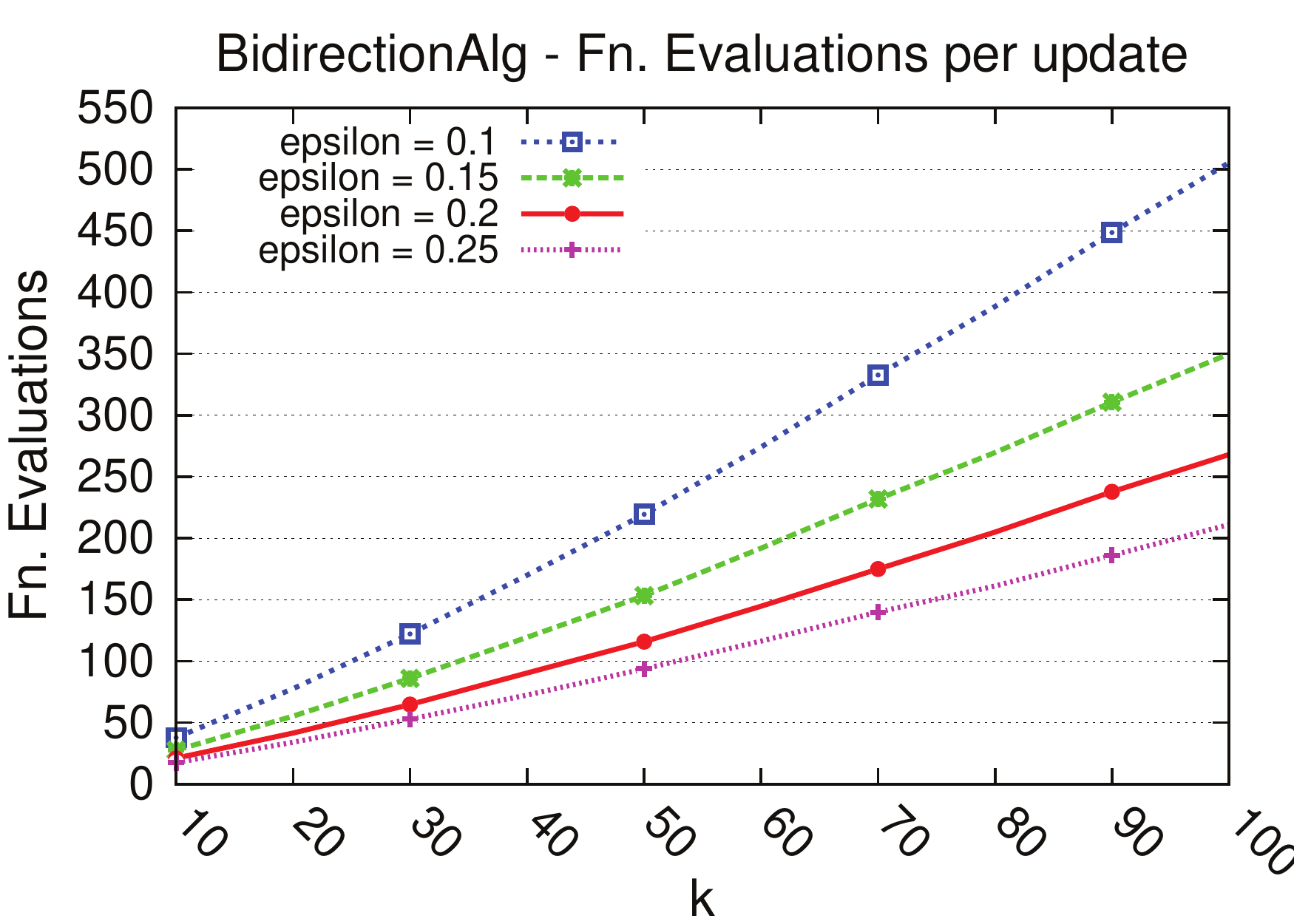}}
\subfigure[\SmoothAlg]{\includegraphics[width=0.4\textwidth,keepaspectratio]{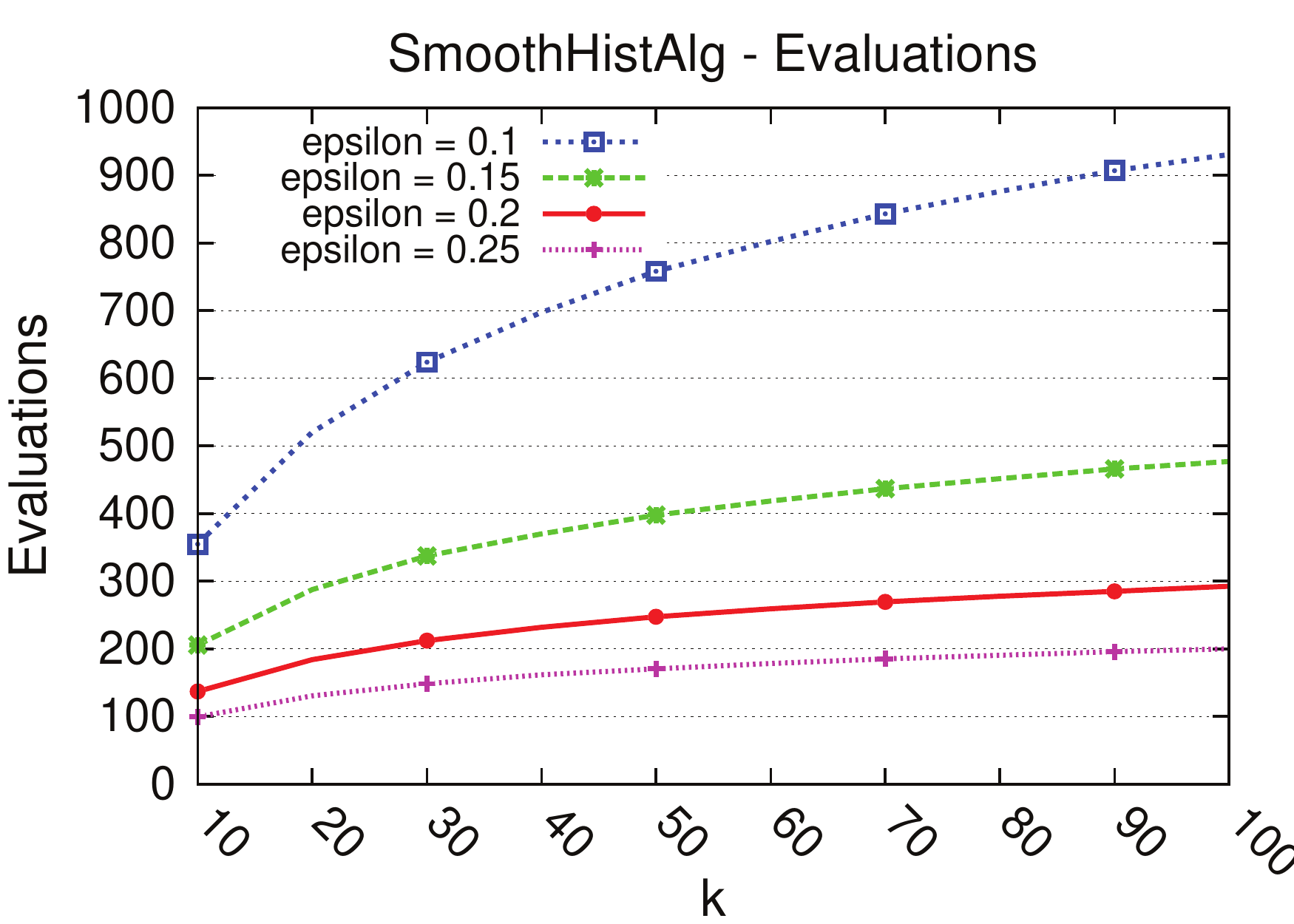}}
\caption{{Number of evaluations of the submodular function executed per update---lower is better.}}
\label{fig:calls-number}
\end{center}
\end{figure}

\else 

\begin{figure}[t!]
\subfigure[\BidirectionalAlg]{\includegraphics[width=0.22\textwidth,keepaspectratio]{dblp-ratio-calls-onehalf}}
\subfigure[\SmoothAlg]{\includegraphics[width=0.22\textwidth,keepaspectratio]{dblp-ratio-calls-smooth}}
\caption{{Average ratio of the number of submodular function evaluations executed by greedy over the  ones executed by our algorithms---higher is better.}}
\label{fig:calls-cmp}
\vspace{-0.1in}
\end{figure}
\begin{figure}[t!]
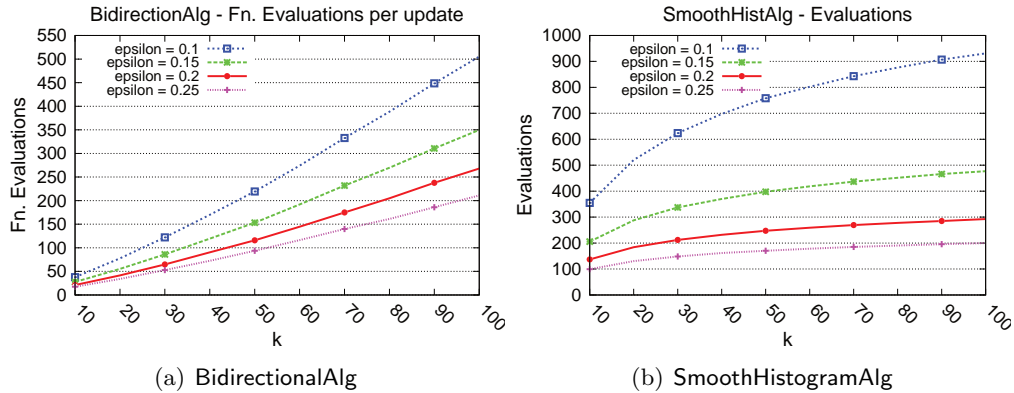

\subfigure[\BidirectionalAlg]{\includegraphics[width=0.22\textwidth,keepaspectratio]{dblp-calls-onehalf}}
\subfigure[\SmoothAlg]{\includegraphics[width=0.22\textwidth,keepaspectratio]{dblp-calls-smooth}}
\caption{{Number of evaluations of the submodular function executed per update---lower is better.}}
\label{fig:calls-number}
\vspace{-0.1in}
\end{figure}

\fi

{\bf Scalability.}
Having established that our algorithms preserve solutions with quality close to the offline greedy algorithm, while taking significantly less time, we  evaluate the scalability of \SmoothAlg in terms of memory use and speed on larger datasets with more challenging window sizes. We set $W=1{,}000{,}000$ and run \SmoothAlg on Gowalla and Yahoo using $\epsilon = 0.25$.
First, in Figure~\ref{fig:others-calls} we show the average number of function evaluation per item processed as a function of $k$. The conclusions from previous experiment continue to hold, with our algorithm requiring no more than $200$ function calls to process every item. Then, we evaluate the memory requirement of our sublinear algorithm \SmoothAlg. To do so in a platform independent way we compute the total number of items stored by our algorithm in all sets $S_\tau$ (counting repetitions) at time $t$, and look at the maximum over the entire stream. We report the results in Figure~\ref{fig:others-stored}. Observe that our algorithm maintains only a small fraction of the current sliding window (between $0.05\%$ and $0.4\%$) thus allowing to process large sliding windows with minimal memory. 

\ifjournal
\begin{figure}[t!]
\begin{center}
\subfigure[Function Evaluations per Update\label{fig:others-calls}]{\includegraphics[width=0.4\textwidth,keepaspectratio]{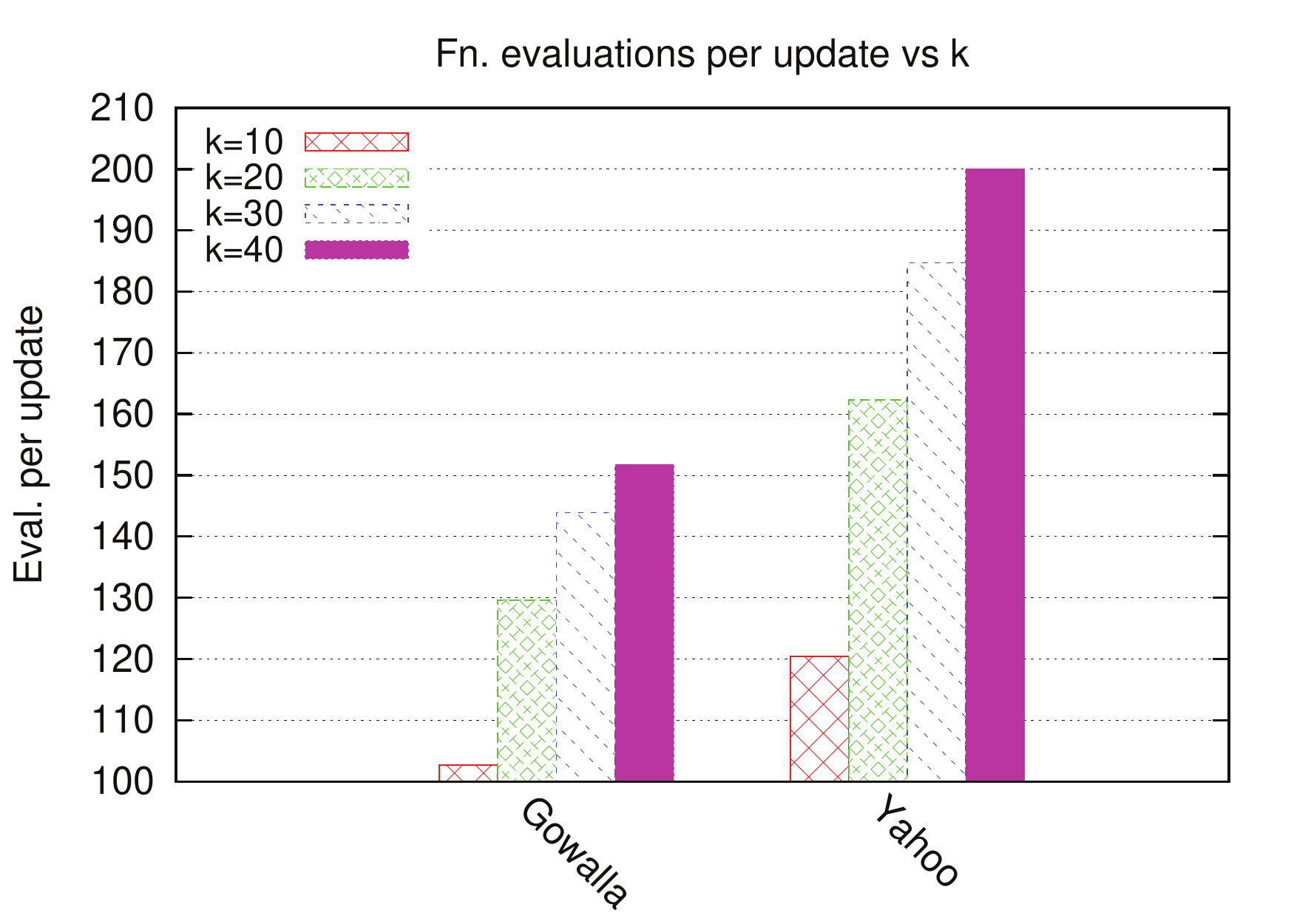}}
\subfigure[Max. Fraction of the Window Stored\label{fig:others-stored}]{\includegraphics[width=0.4\textwidth,keepaspectratio]{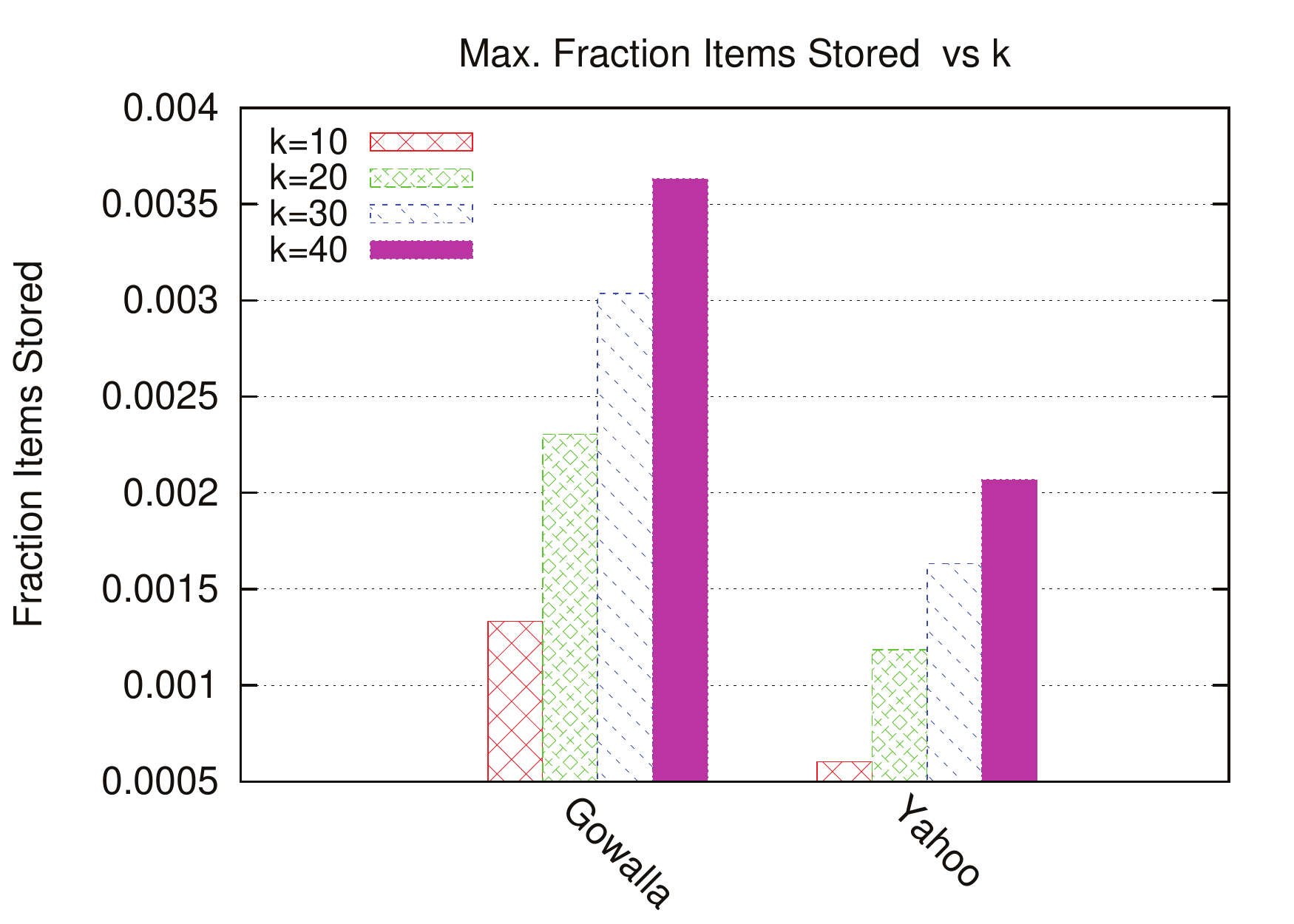}}
\caption{{Number of evaluations of the submodular function per update and fraction of items in the window stored by \SmoothAlg---lower is better.}}
\label{fig:others}
\end{center}
\end{figure}

\else

\begin{figure}[t!]
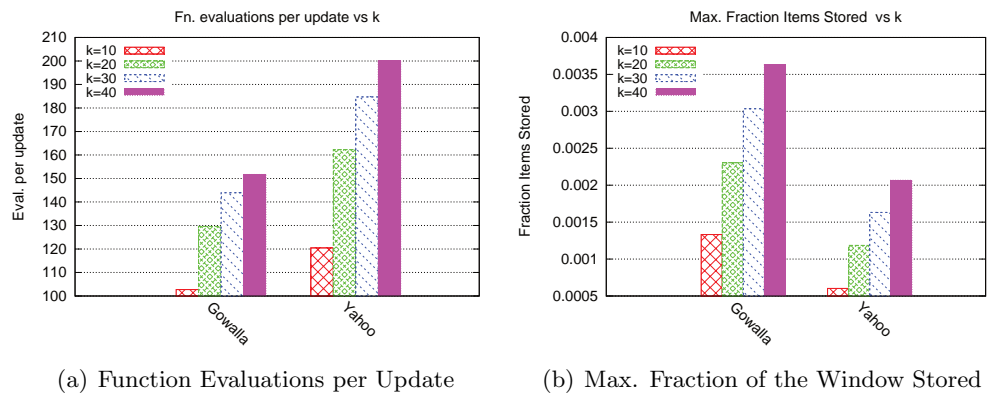

\subfigure[Function Evaluations per Update\label{fig:others-calls}]{\includegraphics[width=0.22\textwidth,keepaspectratio]{others-smooth-calls}}
\subfigure[Max. Fraction of the Window Stored\label{fig:others-stored}]{\includegraphics[width=0.22\textwidth,keepaspectratio]{others-smooth-stored}}
\caption{{Number of evaluations of the submodular function per update and fraction of items in the window stored by \SmoothAlg---lower is better.}}
\label{fig:others}
\vspace{-0.1in}
\end{figure}
\fi

\section{Conclusions}
\label{sect:conclusions}
We showed the first non-trivial algorithms for arbitrary monotone submodular functions subject to cardinality constraints in sliding window settings. We proved that one can achieve approximation ratios of $\nicefrac{1}{2}-\epsilon$, while using sublinear space and time per update.  An interesting direction for future work is to address this problem in the fully dynamic setting, where addition and deletion of items is allowed in arbitrary order. Another interesting question is whether it is possible to improve the approximation guarantees of $\nicefrac1 2-\epsilon$ in the streaming context.
\balance
\bibliography{references}

\begin{thebibliography}{10}

\bibitem{sublinear}
{List of open problems in sublinear algorithms: Problem 20}.
\newblock \url{http://sublinear.info/20}.

\bibitem{mirrokniKDD2013}
Zeinab Abbassi, Vahab~S Mirrokni, and Mayur Thakur.
\newblock Diversity maximization under matroid constraints.
\newblock In {\em ACM SIGKDD}, pages 32--40. ACM, 2013.

\bibitem{aggarwal2007data}
Charu~C Aggarwal.
\newblock {\em Data streams: models and algorithms}, volume~31.
\newblock Springer Science \& Business Media, 2007.

\bibitem{KDD14}
Ashwinkumar Badanidiyuru, Baharan Mirzasoleiman, Amin Karbasi, and Andreas
  Krause.
\newblock Streaming submodular maximization: Massive data summarization on the
  fly.
\newblock In {\em ACM SIGKDD}, pages 671--680. ACM, 2014.

\bibitem{badanidiyuru2014fast}
Ashwinkumar Badanidiyuru and Jan Vondr{\'a}k.
\newblock Fast algorithms for maximizing submodular functions.
\newblock In {\em SODA}, pages 1497--1514. Society for Industrial and Applied
  Mathematics, 2014.

\bibitem{bakshy2012role}
Eytan Bakshy, Itamar Rosenn, Cameron Marlow, and Lada Adamic.
\newblock The role of social networks in information diffusion.
\newblock In {\em International conference on World Wide Web}, pages 519--528.
  ACM, 2012.

\bibitem{bateni2010submodular}
MohammadHossein Bateni, MohammadTaghi Hajiaghayi, and Morteza Zadimoghaddam.
\newblock Submodular secretary problem and extensions.
\newblock In {\em Approx}, pages 39--52. Springer, 2010.

\bibitem{DBLP:conf/soda/BravermanLLM16}
Vladimir Braverman, Harry Lang, Keith Levin, and Morteza Monemizadeh.
\newblock Clustering problems on sliding windows.
\newblock In {\em SODA}, pages 1374--1390, 2016.

\bibitem{Braverman07}
Vladimir Braverman and Rafail Ostrovsky.
\newblock Smooth histograms for sliding windows.
\newblock In {\em FOCS}, pages 283--293, 2007.

\bibitem{DBLP:journals/algorithmica/ChanLLT12}
Ho{-}Leung Chan, Tak~Wah Lam, Lap{-}Kei Lee, and Hing{-}Fung Ting.
\newblock Continuous monitoring of distributed data streams over a time-based
  sliding window.
\newblock {\em Algorithmica}, 62(3-4):1088--1111, 2012.

\bibitem{cho2011friendship}
Eunjoon Cho, Seth~A Myers, and Jure Leskovec.
\newblock Friendship and mobility: user movement in location-based social
  networks.
\newblock In {\em ACM SIGKDD}, pages 1082--1090. ACM, 2011.

\bibitem{datar2002maintaining}
Mayur Datar, Aristides Gionis, Piotr Indyk, and Rajeev Motwani.
\newblock Maintaining stream statistics over sliding windows.
\newblock {\em SIAM journal on computing}, 31(6):1794--1813, 2002.

\bibitem{DBLP}
DBLP.
\newblock {DBLP xml dump -- Sept 2016}.
\newblock \url{http://dblp.uni-trier.de/xml/}.

\bibitem{Feige98}
Uriel Feige.
\newblock A threshold of ln n for approximating set cover.
\newblock {\em J. ACM}, 45(4):634--652, July 1998.

\bibitem{FreyNIPS2005}
Brendan~J. Frey and Delbert Dueck.
\newblock Mixture modeling by affinity propagation.
\newblock In {\em NIPS}, pages 379--386, Cambridge, MA, USA, 2005. MIT Press.

\bibitem{gibbons2002distributed}
Phillip~B Gibbons and Srikanta Tirthapura.
\newblock Distributed streams algorithms for sliding windows.
\newblock In {\em SPAA}, pages 63--72. ACM, 2002.

\bibitem{KempeKDD2003}
David Kempe, Jon Kleinberg, and \'{E}va Tardos.
\newblock Maximizing the spread of influence through a social network.
\newblock In {\em ACM SIGKDD}, pages 137--146, New York, NY, USA, 2003. ACM.

\bibitem{kumar2015fast}
Ravi Kumar, Benjamin Moseley, Sergei Vassilvitskii, and Andrea Vattani.
\newblock Fast greedy algorithms in mapreduce and streaming.
\newblock {\em ACM Transactions on Parallel Computing}, 2(3):14, 2015.

\bibitem{leskovec2007cost}
Jure Leskovec, Andreas Krause, Carlos Guestrin, Christos Faloutsos, Jeanne
  VanBriesen, and Natalie Glance.
\newblock Cost-effective outbreak detection in networks.
\newblock In {\em ACM SIGKDD}, pages 420--429. ACM, 2007.

\bibitem{BilmesACL2011}
Hui Lin and Jeff Bilmes.
\newblock A class of submodular functions for document summarization.
\newblock In {\em ACL: Human Language Technologies}, pages 510--520.
  Association for Computational Linguistics, 2011.

\bibitem{mcgregor2014graph}
Andrew McGregor.
\newblock Graph stream algorithms: a survey.
\newblock {\em ACM SIGMOD}, 43(1):9--20, 2014.

\bibitem{citeulike:10637610}
Michel Minoux.
\newblock {Accelerated greedy algorithms for maximizing submodular set
  functions}.
\newblock In J.~Stoer, editor, {\em Optimization Techniques}, volume~7 of {\em
  Lecture Notes in Control and Information Sciences}, chapter~27, pages
  234--243. Springer Berlin Heidelberg, Berlin/Heidelberg, 1978.

\bibitem{KrauseNIPS2013}
Baharan Mirzasoleiman, Amin Karbasi, Rik Sarkar, and Andreas Krause.
\newblock Distributed submodular maximization: Identifying representative
  elements in massive data.
\newblock In {\em NIPS}, pages 2049--2057. Curran Associates, Inc., 2013.

\bibitem{nemhauser1978analysis}
George~L Nemhauser, Laurence~A Wolsey, and Marshall~L Fisher.
\newblock An analysis of approximations for maximizing submodular set
  functions—i.
\newblock {\em Mathematical Programming}, 14(1):265--294, 1978.

\bibitem{saha2009maximum}
Barna Saha and Lise Getoor.
\newblock On maximum coverage in the streaming model \& application to
  multi-topic blog-watch.
\newblock In {\em SDM}, volume~9, pages 697--708. SIAM, 2009.

\bibitem{DBLP:journals/algorithms/TingLCL11}
Hing{-}Fung Ting, Lap{-}Kei Lee, Ho{-}Leung Chan, and Tak~Wah Lam.
\newblock Approximating frequent items in asynchronous data stream over a
  sliding window.
\newblock {\em Algorithms}, 4(3):200--222, 2011.

\bibitem{yahoo-front}
Yahoo!
\newblock {Front Page Today Module User Click Log Dataset}.
\newblock \url{http://webscope.sandbox.yahoo.com}.

\end{thebibliography}
\bibliographystyle{plain}

\end{document}